\newcommand{\raf}[1]{(\ref{#1})}
\newtheorem{theorem}{Theorem}
\newtheorem{lemma}[theorem]{Lemma}
\newtheorem{definition}{Definition}
\newtheorem{remark}{Remark}
\newcommand{\OPT}{\ensuremath{\textsc{Opt}}}
\def\CC{\mathbb C}
\def\RR{\mathbb R}
\def\cA{\mathcal A}
\def\cF{\mathcal F}
\def\cI{\mathcal I}
\def\cN{\mathcal N}
\def\cS{\mathcal S}
\def\cT{\mathcal T}
\def\cU{\mathcal U}
\def\bzero{\mathbf 0}
\newcommand{\hide}[1]{}
\newcommand{\re}{\ensuremath{\mathrm{Re}}}
\newcommand{\im}{\ensuremath{\mathrm{Im}}}
\newcommand{\cP}{\ensuremath{\mathcal{P}}}
\newcommand{\cV}{\ensuremath{\mathcal{V}}}
\newcommand{\cE}{\ensuremath{\mathcal{E}} }
\title{Combinatorial Optimization of AC Optimal Power Flow with Discrete Demands in Radial Networks}
\author{
Majid Khonji, Sid Chi-Kin Chau, and Khaled Elbassioni
\thanks{M. Khonji and K. Elbassioni are with the Department of Electrical Engineering and Computer Science at Khalifa University of Science and Technology.  S. C.-K. Chau is with the Research School of Computer Science, Australian National University.  (e-mail: \{majid.khonji, khaled.elbassioni\}@ku.ac.ae, sid.chau@anu.edu.au).}\vspace*{-20pt}
}
\newif\ifsupplementary
\begin{document}

\maketitle

\begin{abstract}
The AC Optimal power flow (OPF) problem is one of the most fundamental problems in power systems engineering. For the past decades, researchers have been relying on unproven heuristics to tackle OPF. The hardness of OPF stems from two issues: (1) non-convexity  and (2) combinatoric constraints (e.g., discrete power extraction constraints). The recent advances in providing sufficient conditions on the exactness of convex relaxation of OPF can address the issue of non-convexity. To complete the understanding of OPF, this paper presents a polynomial-time approximation algorithm to solve the convex-relaxed OPF with discrete demands as combinatoric constraints, which has a provably small parameterized approximation ratio (also known as PTAS algorithm).  Together with the sufficient conditions on the exactness of the convex relaxation, we provide an efficient approximation algorithm to solve OPF with discrete demands,  when the underlying network is radial with a fixed size and one feeder. 
{\color{black}The running time of PTAS is $O(n^{4m/\epsilon}T)$, where $T$ is the time required to solve a convex relaxation of the problem, and $m, \epsilon$ are fixed constants. Based on prior hardness results of OPF, our PTAS is among the best achievable in theory.}
Simulations show our algorithm can produce close-to-optimal solutions in practice.

\end{abstract}
\begin{IEEEkeywords}
Optimal power flow, approximation algorithms, discrete power demands, combinatorial optimization, PTAS
\end{IEEEkeywords}

\pagenumbering{arabic}

\vspace{-10pt}
\section{Introduction}

The AC optimal power flow (OPF) problem underpins many optimization problems of power systems. However, OPF is notoriously hard to solve. The hardness of OPF stems from two issues: (1) {\em non-convexity} because of the non-convex constraints involving complex-valued entities of  power systems,  and (2) {\em combinatoric constraints}, for example, discrete power injection/extraction constraints. In the past, due to the lack of understanding of solvability of OPF, researchers have been relying on unproven heuristics or general numerical solvers, which suffer from the issues of excessive running-time, lack of termination guarantee, or uncertainty of how far the output solutions deviate from the true optimal solutions.

Recently, there have been advances in tackling OPF by applying convex relaxations\cite{huang2017sufficient,gan2015exact,low2014convex1,low2014convex2}. These results imply that the relaxation of certain equality operating constraints to be inequality constraints can attain a more tractable convex programming problem which admits an optimal solution to the original problem, under certain mild sufficient conditions verifiable in a prior. Remarkably, these results can be applied to OPF with discrete power extractions (e.g., \cite{gan2015exact}). 
 There are also several relaxations based on semi-definite programming (SDP) formulations that rely on the bus injection model for power flow analysis. \cite{kocuk2016strong, hijazi2017convex, fattahi2017conic, josz2015application}.  

However, because of the lack of proper efficient algorithms to solve the convex-relaxed OPF with combinatoric constraints, most prior papers (e.g., \cite{huang2017sufficient,gan2015exact,low2014convex1,low2014convex2}) only studied OPF with continuous power extraction constraints, such that the controls of power extractions can be partially satisfied. Some papers considered OPF  with discrete control variables \cite{briglia2017distribution,murray2015optimal,lin2008distributed, hijazi2017convex, fattahi2017conic}, but their  algorithms rely mainly on heuristics and show no guarantee on optimality and running time.

 In practice, there are many discrete power extraction constraints. For example, certain loads and devices can be either switched on or off, and hence, their control decision variables are binary.  To tackle OPF in these settings, it is important to provide feasible solutions that satisfy the discrete power extraction constraints. 

Solving combinatorial optimization problems by efficient algorithms in general is a main subject studied in theoretical computer science. Hence, we will draw on the related notions and terminology from theoretical computer science. There is a well-known class of problems, known as NP-hard problems, which are believed to be intractable to find the exact optimal solutions in polynomial-time. However, taking into consideration of approximation solutions (i.e., the solutions are within a certain approximation ratio over an optimal solution), it is possible to obtain efficient polynomial-time approximation algorithms for certain NP-hard problems. One efficient type of approximation algorithms is called PTAS ({\em polynomial-time approximation scheme}) \cite{Vazirani10}, which allows a parametrized approximation ratio as the running time of the algorithm. Thus, one can change the desired approximation ratio, at the expense of running time. In this paper,  our goal is to provide a PTAS to solve OPF with discrete demands as combinatoric constraints.  Due to the difficulty of the problem in its most general form, we will focus on radial networks, and present our PTAS for the case when the underlying network has a {\it fixed} size and only one generator. {\color{black} The running time of PTAS is $O(n^{4m/\epsilon}T)$, where $T$ is the time required to solve a convex relaxation of the problem, $m$ is the network size, and $\epsilon$ is a fixed constant that controls the approximation ratio.}


This work is also related to a number of recent developments.  First, the combinatorial optimization for a single-capacity power system has been studied  as {\em complex-demand knapsack problem} in prior work \cite{chau2016truthful}. Then, an approximation algorithm is provided for simplified DistFlow model of OPF without considering generation cost in \cite{khonji2016optimal}. To our best knowledge, this is the first work to present a PTAS algorithm for combinatorial optimization in a realistic OPF model.

This paper is structured as follows: First, the model of OPF and the idea of convex relaxation are reviewed in Secs.~\ref{sec:model}-\ref{sec:relax}. Next, we presents a PTAS algorithm to solve the convex-relaxed OPF with discrete demands in Sec.~\ref{sec:ptas}. Together with the sufficient conditions on the exactness of convex relaxation, we provide an efficient approximation algorithm to solve OPF with discrete demands. Furthermore, fundamental hardness results of OPF are discussed to show that our PTAS is among the best achievable in Sec.~\ref{sec:hard}. Lastly, simulations in Sec.~\ref{sec:sim} show that the proposed algorithm can produce close-to-optimal solutions in practice.

\vspace*{-10pt}
\section{Problem Definition and Notations} \label{sec:model}

\subsection{Optimal Power Flow Problem on Radial Networks}

As in the previous work \cite{huang2017sufficient,gan2015exact}, this paper considers a radial (tree) electric distribution network, represented by a graph $\cT=(\cV,\cE)$. The set of nodes $\cV=\{0,...,m\}$ denotes the electric buses, and the set of edges $\cE$ denotes the distribution lines. Let $\cV^+\triangleq \cV \setminus\{0\}$. A substation feeder is attached to the root of the tree, denoted by node $0$. We assume that root $0$ is only connected to node 1 via a single edge (0,1). See an illustration in Fig.~\ref{fig:tree}. Since $\cT$ is a tree, $|\cV^+|=|\cE| = m$. Let $\cT_i=(\cV_i,\cE_i)$ be the subtree rooted at node $i$.
\begin{figure}[!htb]
	\begin{center} \vspace{-15pt}
		\includegraphics[scale=.6]{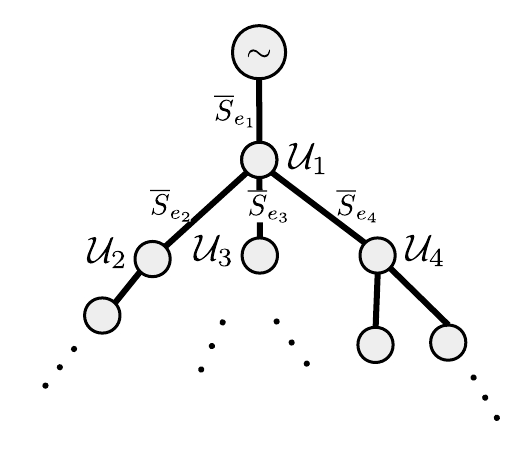}
	\end{center}\vspace{-20pt}
	\caption{An illustration of a radial network.}\vspace{-10pt}
	\label{fig:tree} 
\end{figure}

Note that this paper adopts the flow orientation that power flows from the root (node $0$) towards the leaves\footnote{Our formulation is different from \cite{gan2015exact, huang2017sufficient}, which assume the flow orientation that power flows from a leaf to the root. But it is shown in \cite{low2014convex1} that there is a bijection between the models of two flow orientations.}.  Hence, tuple $(i,j)$ refers to a directed edge, where node $i$ is a parent of $j$. Denote the path from the root $0$ to node $j$ by $\cP_j$.  

 Instead of assigning a single power extraction to each node, this paper considers a general setting where a set of users are attached to each node. Each user can control his power demand individually. Let $\cN$ be the set of all users, where $|\cN|=n$. Denote the set of  users  attached to node $j$ by $\cU_j \subseteq \cN$. Let the set of users within subtree $\cT_j$ be $\cN_j \triangleq \cup_{j\in\cV_j} \cU_j$. Denote the path from root $0$ to user $k$ by $\cP_k$.

The demand for user $k$ is represented by a complex number  $s_k\in \CC$.  
Among the users, some have discrete (inelastic) power demands, denoted by $\cI \subseteq\cN$. A discrete demand $s_k$, for $k\in \cI$, takes values from a discrete set $\hat \cS_k \subseteq \CC$. We assume $\hat \cS_k\triangleq \{\bzero,\overline s_k\}$, where a demand $s_k$ can be either completely satisfied at level $\overline s_k \in \CC$ or dropped, e.g., a piece of equipment that is either switched on with a fixed power rate or off\footnote{We make such assumption for simplicity. However, an extension to any set of discrete demands can be done in a straightforward forward manner, by replacing the binary decision variable $x_k\in\{0,1\}$ by $x_{k,s}$ for $s\in\hat\cS_k$ such that $\sum_{s\in\cS_k}x_{k,s}=1$.}. 
The rest of users, denoted by $\cF \triangleq \cN\backslash\cI$, have continuous demands, defined by convex sets $\cS_k$, for $k\in \cF$; a typical example is  a set defined by box constraints: $\cS_k \triangleq \{s_k \in \CC \mid \underline s_k \le s_k \le \overline s_k \}$, for given lower and upper bounds $\underline s_k$ and $\overline s_k$. 
We consider only {\it consumer}\footnote{On the other hand, when some of the {\it continuous} users are {\it  supplies} (such as in distributed generation), our approximation algorithm for the discrete users with a good approximation guarantee so far cannot be generalized in a straightforward manner (see the footnote in the proof of Lemma~\ref{feas}).} users, such that $\re(s_k)\ge 0$ (but $\im(s_k)$ may be negative) for all $k\in \cN$. This assumption is justified for discrete demands in Sec.~\ref{sec:hard}, as we show fundamental difficulties of OPF when it has discrete control variables for both demand ($\re(s_k)\ge 0$) and supply ($\re(s_k)\le 0$).
Fig.~\ref{fig:o1} shows a pictorial illustration of a single link topology with multiple users. 
\begin{figure}[!h]
	\centering
	\includegraphics[scale=1]{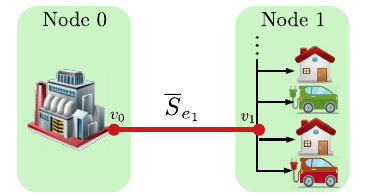}
	\caption{Users declare their power demands to a centralized coordinator, which decides which demand to satisfy subject to operating constraints of the power network.}\vspace*{-10pt}
	\label{fig:o1}
\end{figure}

Let $v_j$ and $\ell_{i,j}$ be the voltage and current magnitude square at node $j$ and edge $(i,j)$, respectively. Let $S_{i,j}$ be the power flowing from node $i$ towards  node $j$. Note that $S_{i,j}$ is not symmetric, namely, $S_{i,j} \ne S_{j,i}$. 

There are several operating constraints of power systems:
\begin{itemize}
\item ({\em Power Capacity Constraints}): \\$|S_{i,j}|\le \overline S_{i,j}, |S_{j,i}|\le \overline S_{i,j},~\forall (i,j) \in \cE$.
\item ({\em Current Thermal Constraints}): $\ell_{i,j}\le \overline \ell_{i,j}, \forall(i,j) \in \cE$.
\item ({\em Voltage Constraints}): $\underline v_j \le v_{j} \le \overline v_j, \forall j \in \cV^+$.      
\end{itemize}
$\underline v_j, \overline v_j \in\RR^+$ are the minimum and maximum allowable voltage magnitude square at  node $j$, and $\overline S_{i,j}, \overline \ell_{i,j}\in \RR^+$ are the maximum allowable apparent power and current on edge $(i,j)$. 
 
The power supply at root $0$ is denoted by $s_0$. This paper adopts the convention to denote a power supply by a complex number with negative real part and a power demand by  a complex number with positive real part (i.e., $ \re(s_0) \le 0$ and  $ \re(s_k)\ge 0$ for all $k\in \cN$).
 In the following, a subscript is omitted from a variable to denote its vector form, such as $S\triangleq(S_{i,j})_{(i,j)\in\cE}, \ell \triangleq (\ell_{i,j})_{(i,j)\in\cE}, s\triangleq(s_k)_{k\in \cN}, x\triangleq(x_k)_{k\in \cI}, v\triangleq(v_j)_{j\in \cV^+}$.

The goal of OPF is to find an assignment for the demand vector  $s$ and supply $s_0$ that minimizes a non-negative  {\it convex} (cost) objective function\footnote{Note we follow the convention in \cite{huang2017sufficient,gan2015exact} that the objective function depends only on the active power; however our procedures and proofs can be extended in a straightforward way to deal with the case when the function $f_k(\cdot)$ depends also on the reactive power. Note also that, due to Eqn.~\ref{p1:con3}, the generation is actually $-s_0$; hence, $f_0$ is {\it non-decreasing} in the amount of generation.} $f$: 
\begin{equation}\label{eq:obj} 
f(s_0, s) = f_0\big(-\re(s_0)\big) + \sum_{k\in\cN }f_k\big(\re(s_k)\big),
\end{equation}
where $f_0$ is the non-negative and {\it non-decreasing} cost for the active power supply (note that $\re(s_0) \le 0$), and $f_k$ is the non-negative and non-increasing cost for each satisfied active power demand, such that $f_k(\re(\overline s_k))=0$ (i.e., each user prefers maximum demand).

We formulate OPF using the branch flow model (BFM)\footnote{To be precise, this model is called branch flow model with angle relaxation \cite{baran1989placement,low2014convex1}, as it omits the phase angles of voltages and currents. But it is always possible to recover the phase angles from a feasible solution in a radial network \cite{low2014convex1}.}.  The inputs are the voltage, current and transmitted power limits $\big[ v_0,(\underline v_j, \overline v_j)_{j\in \cV^+},(\overline S_e, \overline \ell_e, z_e)_{e\in \cE},(\cS_k)_{k\in \cF}, (\hat \cS_k)_{k\in \cI}\big]$, whereas the outputs are the control decision variables and power flow states  $\big[s_0, s, S, v, \ell\big]$.

The OPF that utilizes BFM is given by the following mixed integer programming problem:
\begin{align}
& \text{(OPF)}~ \min_{\substack{s_0, s, S, v, \ell \;\;}} f(s_0, s)  \notag \\
& \text{subject to} \notag \\
 &    \ell_{i,j} =  \frac{|S_{i,j}|^2}{v_i}, &  \forall (i,j) \in \cE,  \label{p1:con1}  \\
& S_{i,j}=  \sum_{k \in \cU_j} s_k  + \sum_{l:(j,l)\in \cE} S_{j,l} + z_{i,j}\ell_{i,j}, &  \forall (i,j) \in \cE,  \label{p1:con2}  \\
& S_{0,1} = - s_0, \label{p1:con3}\\ 
&       v_j = v_i + |z_{i,j}|^2 \ell_{i,j} - 2 \re(z_{i,j}^\ast  S_{i,j}), &   \forall (i,j) \in \cE, \label{p1:con4} \\
& \underline v_j \le v_j \le \overline v_j,    & \forall j \in\cV^+\label{p1:con5}  \\
& |S_{i,j}| \le \overline S_{i,j},~|S_{j,i}| \le \overline S_{i,j}, & \forall (i,j) \in \cE, \label{p1:con6}\\
& \ell_{i,j} \le\overline \ell_{i,j},  & \forall (i,j) \in \cE, \label{p1:con7}\\
&   s_k \in \cS_k & \forall k \in \cF, \label{p1:con8.1}\\
& s_k \in \hat \cS_k, & \forall k \in \cI, \label{p1:con8.2}\\
& v_j \in \RR^+,  \forall j \in \cV^+, ~ \ell_{i,j} \in \RR^+, S_{i,j} \in \CC \label{p1:con9}, &  \forall(i,j)\in \cE.
\end{align}

We make a few remarks:
\begin{enumerate}
\item 
BFM can be also expressed using the opposite orientation towards node 0: $S_{j,i}  =  \sum_{(l,j)\in\cE}\big(  S_{l,j}- z_{l,j} |I_{l,j}|^2\big) - \sum_{k\in\cU_j}s_k$ (see e.g., \cite{huang2017sufficient,gan2015exact}). As shown in \cite{low2014convex1}, there is a bijection between the models of the two orientations, since $S_{j,i} = - S_{i,j} + z_{i,j} |I_{i,j}|^2$ and $I_{i,j} = -I_{j,i}$.
\item 
We explicitly consider capacity constraints (Cons.~\raf{p1:con6}) in both directions, whereas  \cite{huang2017sufficient} implicitly considers only one direction. Note that $|S_{j,i}| \le \overline S_{i,j}$ can be reformulated as $|S_{i,j}- z_{i,j}\ell_{i,j}| \le \overline S_{i,j}$.
Our results can be applied to  bi-directional capacity constraints, which are stronger than that of \cite{huang2017sufficient}. See Sec.~\ref{sec:relax} for a discussion. 
\item 
Cons.~\raf{p1:con8.2} are combinatoric constraints with discrete variables. Although \cite{gan2015exact} also considers the possibility of discrete power injections,  it does not solve the respective optimal solutions. 
\item 
The non-linearity of Cons.~\raf{p1:con1} makes this program non-convex (together with the discrete Cons.~\raf{p1:con8.2}). 
\item 
This paper considers a {\it convex} objective function mainly for  {\it computational efficiency}, although, as in \cite{gan2015exact}, only a non-increasing function is required for the exactness of the convex relaxation. {\color{black} More precisely, except for the last claim in Theorem~\ref{thm:exact} (about the computational efficiency of finding the solution $F'$ from $F''$), the theorem holds for non-convex but non-decreasing objective functions. For simplicity and as in \cite{gan2015exact}, we use the term ``convex relaxation" even when the objective function is not convex, but assuming constraints of the relaxation define a convex region.}

\end{enumerate}

\vspace*{-20pt}
\subsection{Approximation Solutions} \label{sec:approx}

This paper provides an efficient approximation algorithm to solve OPF with combinatoric constraints. Approximation algorithms are a well-studied subject in theoretical computer science \cite{Vazirani10}. In the following, we define some standard terminology for approximation algorithms.

Consider a minimization problem $\cA$ with non-negative objective function $f(\cdot)$, let $F$ be a feasible solution to  $\cA$ and  $F^\star$ be an optimal solution to $\cA$. $f(F) $ denotes the objective value of $F$. Let $\OPT= f(F^\star)$ be the optimal objective value of  $F^\star$. A common definition of approximation solution is $\alpha$-approximation, where $\alpha$ characterizes the approximation ratio between the approximation solution and an optimal solution.

\begin{definition}
        For $\alpha> 1$, an $\alpha$-approximation to minimization problem $\cA$  is a feasible solution $F$ such that $$ f(F)\le \alpha \cdot \OPT.$$
\end{definition}

In particular, {\em polynomial-time approximation scheme} (PTAS) is a $(1+\epsilon)$-approximation algorithm  to a minimization problem, for any $\epsilon>0$.  The running time of a PTAS is polynomial in the input size for every fixed $\epsilon$, but the exponent of the polynomial might depend on $1/\epsilon$.  Namely, PTAS allows a parametrized approximation ratio as the running time.

\vspace*{-10pt}
\subsection{Assumptions}

OPF with combinatoric constraints is hard to solve.  Hence, there are some assumptions to facilitate our solutions:
\begin{itemize}
\item [{\sf A0:}] $f_0(-s_0^{\rm R})$ is  non-decreasing in $-s^{\rm R}_0\in\RR^+$. 
\item [{\sf A1:}]  $z_e \ge 0, \forall e\in\cE$, which naturally holds in distribution networks.

\item [{\sf A2:}] $\underline v_j < v_0 < \overline v_j, \forall j\in \cV^+$, which is also assumed in \cite{huang2017sufficient}. Typically in a distribution network, $v_0$ = 1 (per unit), $\underline v_{j}=(.95)^2$ and $\overline v_{j}=(1.05)^2$; in other words, 5\% deviation from the nominal voltage is allowed. 


\item [{\sf A3:}]       $\re(z_{e}^* \overline{s}_k)\ge 0, \forall k\in \cI, e\in \cE$.
 Intuitively, {\sf A3} requires that the phase angle difference between any $z_e$ and $s_k$ for $k \in \cI$ is at most $\frac{\pi}{2}$. This assumption holds, if the discrete demands do not have large negative reactive power. 
                
\item[{\sf A4:}] $\big|\angle s_k- \angle  s_{k'}\big| \le  \tfrac{\pi}{2}$ for any $k, k' \in \cN$.   Intuitively, {\sf A4} requires that the demands have ``similar'' power factors. {\sf A4} can also be stated as $\re( s^*_k s_{k'})\ge 0$. 

\end{itemize}
Assumptions {\sf A3}  and  {\sf A4} are motivated, from a theoretical point of view, by the inapproximability results which will be presented in Sec.~\ref{sec:hard} (if either assumption does not hold, then the problem cannot be approximated within any polynomial factor unless P=NP). Assumption {\sf A3}  also holds in reasonable practical settings, see e.g., \cite{huang2017sufficient}. As we will see in Sec.~\ref{sec:rot}, by performing an axis rotation, we may assume by {\sf A4} that $s_k\ge 0$. Clearly, under this and assumption {\sf A1}, the reverse power constraint in~\raf{p1:con6} (i.e., $|S_{i,j}- z_{i,j}\ell_{i,j}| \le \overline S_{i,j}$) is implied by the forward power constraint ($|S_{i,j}|\le\overline S_{i,j}$). It will also be observed below that under assumptions {\sf A1},   {\sf A2}  and  {\sf A3}, the voltage upper bounds in \raf{p1:con5} can be dropped.

\section{Review of Convex Relaxation of OPF} \label{sec:relax}

This section presents a brief review of convex relaxation of OPF. The idea of relaxing OPF to a convex optimization problem can significantly improve the solvability of OPF. Convex optimization problems can be solved efficiently by a polynomial-time algorithm. Under certain conditions, the convex relaxation can be shown to obtain the optimal solutions of OPF. 
A second order cone programming (SOCP) relaxation of OPF is obtained by replacing Cons.~\raf{p1:con1} by $\ell_{i,j} \ge \frac{|S_{i,j}|^2}{v_i}$.
For convenience of notation, we rewrite Cons.~\raf{p1:con8.2} by $s_k=  \overline s_k x_k$, where $x_k \in \{0,1\}$ is a control variable. A  relaxation of OPF (denoted by \text{cOPF}) is defined as follows:
\begin{align}
\textsc{(cOPF)}&  \min_{\substack{s_0, s, x, S, v,\ell \;\;}} f(s_0, s),  \notag \\
\text{s.t.} \ &\raf{p1:con2}-\raf{p1:con8.1},\raf{p1:con9},\notag \\
& \ell_{i,j} \ge  \frac{|S_{i,j}|^2}{v_i}, & \forall (i,j) \in \cE, \label{p2:con1} \\
& s_k =  \overline s_k x_k, ~x_k\in\{0,1\}, & \forall k\in\cI. \label{p2:con2}
\end{align}
To see that \raf{p2:con1} is indeed an SOCP constraint, note that Cons.~\raf{p2:con1} can be written as
\begin{align*}
\Bigg\|\left(\begin{array}{c}
2S_{i,j}^{\rm R}\\
2S_{i,j}^{\rm I}\\
\ell_{i,j}-v_i
\end{array}
\right)\Bigg\|_2 \le \ell_{i,j}+v_i
\end{align*}
Notice that \textsc{cOPF} is not completely convex due to the discrete constraints in \raf{p2:con2}. 
In fact, the main contribution of the paper is to provide an approximation scheme for solving \text{cOPF}.

For a given $\hat s\in\CC^n$, we denote by {\sc OPF}$[\hat s]$ (resp., {\sc cOPF}$[\hat s]$ the restriction of \textsc{OPF} (resp., {\sc cOPF}) where we set $s=\hat s$. 
The relaxation \textsc{cOPF}$[\hat s]$  is called (efficiently) {\em exact}, if every optimal solution $F^\star$ of \textsc{cOPF}$[\hat s]$  can be converted to an optimal solution of OPF$[\hat s]$, in a {\it polynomial} number of steps. This definition is adopted from \cite{huang2017sufficient,gan2015exact,low2014convex1,low2014convex2}, but also with an emphasis on {\it efficient computation}.
%

There are several sufficient conditions of exactness\footnote{It should be noted that another sufficient condition for exactness was given in \cite{gan2015exact}, but we will not consider here.}, which are imposed on a given (optimal) solution $F$ of \textsc{cOPF}. The following condition was introduced in \cite{huang2017sufficient}:
\begin{itemize}
	
\item [{\sf C1:}] The solution $F=(s_0,s,v,\ell,S)$ of \textsc{(cOPF)} satisfies the following linear system (in $(\widehat S_{i,j})_{(i,j)\in\cE})$ and $(\widehat v_j)_{j\in\cV^+}$):
\begin{align}
&\widehat S_{i,j} = \sum_{k \in \cU_j} s_k + \sum_{ l:(j,l)\in \cE} \widehat S_{j,l} & \forall (i,j)\in\cE,\label{eq:C1rec}\\
&\widehat v_i - \widehat v_j = 2\re(z_{i,j}^* \widehat S_{i,j}) & \forall (i,j)\in\cE, \label{eqn:C2rec}\\
&\re(z_{h,l}^* \widehat S_{i,j}) \ge 0,  \qquad \forall (i,j) \in\cE, & \forall(h,l)\in \cE_j,\label{eq:rsht}\\
&\widehat v_j \le \overline v_j, & \forall j \in \cV^+\label{eq:vhatt}.
\end{align}

In this paper, we will consider a related condition\footnote{Note that condition {\sf C1} is introduced only for comparison purposes, but is never actually used in the paper.}:
\item [{\sf C2:}]  The solution $F=(s_0,s,v,\ell,S)$ of \textsc{(cOPF)} satisfies
\begin{align}
\sum_{k \in \cN_j}\re(z_{h,l}^*  s_k) \ge 0 \quad \forall j \in \cV^+, (h,l)\in \cE_j \cup \{(i,j)\in\cE \}, \label{eq:exact}
\end{align}
where $\cN_j \triangleq \cup_{j\in\cV_j} \cU_j$ is the set of attached users within subtree $\cT_j$, and $\cE_j \cup \{(i,j)\in\cE \}$ is the set of edges of subtree $\cT_j$ and edges that are connected to node $j$.
	
\end{itemize}

In \cite{huang2017sufficient}, it is shown that {\sf C1} is a sufficient condition for exactness of OPF considering uni-directional power capacity constraints from a leaf to the root\footnote{The sufficient condition in \cite{huang2017sufficient} is stated in a slightly different way, because their problem formulation adopts an opposite flow orientation.}. In order to attain exactness of OPF with bi-directional power capacity constraints, a stronger condition ought to be considered. In addition to \raf{eq:rsht}, it is also required that $\re(z^*_{i,j} \widehat S_{i,j})\ge 0$ which gives~\raf{eq:exact}. Note that by~\raf{eq:exact} and {\sf A2}, and the recursive substitution of $\widehat v_j$ from the root in~\raf{eqn:C2rec}, Cons.~\raf{eq:vhatt} is already satisfied as 
$$\widehat v_j = v_0 - 2 \sum_{e \in \cP_j}\re(z_{e}^* \widehat S_e) \le v_0<\overline v_j,$$
where $\cP_j$ denotes the unique path from root $0$ to node $j$. 
Note also that {\sf A3} implies {\sf C2} when $\cN=\cI$ (as {\sf A3} applies to only discrete demands, whereas {\sf C2} applies to all the demands and edges within a subtree).

The next theorem summarizes the sufficient condition of exactness for the relaxation of OPF. 

\begin{theorem}\label{thm:exact} 
	Let $F''=(s_0'', s', v',\ell'',S'')$ be a feasible solution of {\sc cOPF}$[s']$  satisfying {\sf C2}. Under assumptions~{\sf A0}, {\sf A1}, and {\sf A2}, there is a feasible solution  $F'=(s_0', s', v',\ell',S')$ of {\sc cOPF}$[s']$ that satisfies  $\ell'_{i,j} = \frac{|S_{i,j}'|^2}{v_i'}$ for all  $(i,j)\in \cE,$ and $f(F')\le f(F'')$. Given $F''$, such a solution $F'$ can be found in polynomial time. 
\end{theorem}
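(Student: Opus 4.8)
The plan is to realize $F'$ as the \emph{physical} branch-flow solution associated with the fixed injections $s'$ and root voltage $v_0$, i.e.\ the solution in which the relaxed cone \raf{p2:con1} is tight on every edge, and to obtain it from $F''$ by a monotone downward adjustment of the currents. By {\sf A4} I first apply the axis rotation of Sec.~\ref{sec:rot} and assume $s'_k\ge 0$ for all $k$; together with {\sf A1} ($z_e\ge 0$) this places all demands and impedances in the first quadrant. The key observation is that once $s=s'$ is fixed, the whole state is a function of the current vector $\ell$ alone: solving \raf{p1:con2} from the leaves upward gives $S_{i,j}(\ell)=\sum_{k\in\cN_j}s'_k+\sum_{e\in\cE_j\cup\{(i,j)\}}z_e\ell_e$, solving \raf{p1:con4} from the root downward gives $v_j(\ell)=v_0-\sum_{e\in\cP_j}\big(2\re(z_e^\ast \tilde S_e(\ell))+|z_e|^2\ell_e\big)$ with $\tilde S_{i,j}\triangleq S_{i,j}-z_{i,j}\ell_{i,j}$, and finally $s_0=-S_{0,1}(\ell)$. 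Defining $T(\ell)_{i,j}\triangleq |S_{i,j}(\ell)|^2/v_i(\ell)$, a current vector with the cone tight everywhere is precisely a fixed point $\ell=T(\ell)$.

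First I would establish the two monotonicity facts that drive everything: as $\ell$ decreases componentwise, each $S_{i,j}(\ell)$ moves toward the origin within the first quadrant (immediate from the closed form, since $z_e,s'\ge 0$), and each $v_j(\ell)$ increases. The voltage monotonicity is cleanest from the per-edge drop $v_i-v_j=2\re(z_e^\ast\tilde S_{i,j})+|z_e|^2\ell_e$, where $\tilde S_{i,j}=\sum_{k\in\cU_j}s'_k+\sum_{l:(j,l)\in\cE}S_{j,l}$ is the power leaving node $j$; both terms are nonnegative (the sign conditions furnished by {\sf C2}, equivalently $z_e,s'\ge 0$) and nondecreasing in $\ell$, so every drop is $\ge 0$ and $v_j(\ell)$ is nonincreasing in $\ell$. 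In particular $v_j\le v_0<\overline v_j$ by {\sf A2}, so the voltage \emph{upper} bound in \raf{p1:con5} is automatic. These two facts make $T$ monotone nondecreasing on the box $[\,\bzero,\ell''\,]$. Since $F''$ satisfies \raf{p1:con2} and \raf{p1:con4} with equality, $S''=S(\ell'')$ and $v''=v(\ell'')$, so feasibility of \raf{p2:con1} gives $T(\ell'')_{i,j}=|S''_{i,j}|^2/v''_i\le\ell''_{i,j}$, i.e.\ $T(\ell'')\le\ell''$. Iterating $\ell^{(0)}=\ell''$, $\ell^{(t+1)}=T(\ell^{(t)})$ then yields a componentwise nonincreasing sequence bounded below by $\bzero$, converging to a fixed point $\ell'\le\ell''$; setting $S'=S(\ell')$, $v'=v(\ell')$, $s_0'=-S'_{0,1}$ gives $\ell'_{i,j}=|S'_{i,j}|^2/v'_i$ for all edges, so $F'$ satisfies \raf{p1:con1}. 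This is the standard forward/backward sweep and computes $F'$ in polynomial time.

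Next I would check that $F'$ is feasible and no more costly. From $\ell'\le\ell''$, writing $b_{i,j}=S''_{i,j}-S'_{i,j}=\sum_{e\in\cE_j\cup\{(i,j)\}}z_e(\ell''_e-\ell'_e)$, both $S'_{i,j}$ and $b_{i,j}$ lie in the first quadrant, so $|S''_{i,j}|^2=|S'_{i,j}|^2+2\re((S'_{i,j})^\ast b_{i,j})+|b_{i,j}|^2\ge|S'_{i,j}|^2$; hence $|S'_{i,j}|\le|S''_{i,j}|\le\overline S_{i,j}$ and the forward capacity bound in \raf{p1:con6} holds, dominating the reverse bound under {\sf A1}. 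The current bound \raf{p1:con7} follows from $\ell'_e\le\ell''_e\le\overline\ell_e$, the voltage lower bound in \raf{p1:con5} from $v'_j\ge v''_j\ge\underline v_j$ (voltage monotonicity), and the upper bound was discharged above. Finally, telescoping \raf{p1:con2} along the tree gives $\re(S_{0,1})=\re\big(\sum_{k\in\cN}s'_k\big)+\sum_{e\in\cE}\re(z_e)\ell_e$, so $-\re(s_0')=\re(S'_{0,1})\le\re(S''_{0,1})=-\re(s_0'')$ because $\ell'\le\ell''$ and $\re(z_e)\ge 0$; as the demand part of $f$ is unchanged ($s=s'$), {\sf A0} yields $f_0(-\re(s_0'))\le f_0(-\re(s_0''))$ and therefore $f(F')\le f(F'')$.

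The main obstacle is the monotonicity step: showing $T$ is well defined and monotone along the \emph{entire} downward trajectory, not merely at the endpoints. This requires that every iterate stay in the region where {\sf A2} keeps the denominators $v_i$ positive and bounded away from $0$, and that the sign conditions of {\sf C2} keep $\re(z_e^\ast\tilde S_{e'})\ge 0$ so voltages do not overshoot as currents shrink. I would handle this by induction over the sweep, showing the box $[\,\bzero,\ell''\,]$ is invariant under $T$ and that the first-quadrant (hence {\sf C2}) sign structure is preserved under componentwise decrease of $\ell$; this simultaneously secures well-definedness, the monotone limit, and the feasibility of $F'$.
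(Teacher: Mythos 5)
Your monotonicity engine is essentially the paper's own Lemma~\ref{lem:triv1} (smaller currents $\Rightarrow$ smaller flows and higher voltages under {\sf A1} and {\sf C2}), and your construction of the tight solution via the map $T(\ell)_{i,j}=|S_{i,j}(\ell)|^2/v_i(\ell)$ is a genuinely different route from the paper, which instead solves one auxiliary convex program minimizing $\sum_{e\in\cE}\ell_e$ subject to $\ell\le\ell''$ and $s=s'$, and derives tightness of \raf{p2:con1} by contradiction from optimality. However, your write-up has two genuine gaps. First, you invoke {\sf A4} to rotate all demands into the first quadrant, but {\sf A4} is not among the hypotheses of the theorem (only {\sf A0}, {\sf A1}, {\sf A2}, {\sf C2} are assumed), and preserving {\sf A1} under that rotation would additionally require {\sf A3}; moreover your parenthetical claim that {\sf C2} is ``equivalently $z_e,s'\ge 0$'' is false --- {\sf C2} only gives the \emph{aggregate} inequalities $\sum_{k\in\cN_j}\re(z_{h,l}^* s_k)\ge 0$, which is strictly weaker than per-demand first-quadrant membership. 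This matters concretely in your step $|S''_{i,j}|^2-|S'_{i,j}|^2=2\re((S'_{i,j})^\ast b_{i,j})+|b_{i,j}|^2\ge 0$: the cross term is nonnegative in your argument only because both factors sit in the first quadrant, which you are not entitled to assume. The gap is repairable --- the paper's Lemma~\ref{lem:triv1} proves exactly this inequality using only {\sf A1} and the aggregated {\sf C2} sign conditions --- but as written your argument rests on assumptions the theorem does not grant.

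Second, and more substantively, the ``polynomial time'' clause is not established. Your iteration $\ell^{(t+1)}=T(\ell^{(t)})$ is monotone nonincreasing and bounded below, so it converges, and by continuity its limit is a fixed point; this proves \emph{existence} of the tight solution $F'$. But monotone convergence gives no rate: without a contraction estimate (or some other quantitative bound on the Jacobian of $T$ along the trajectory), there is no bound on the number of sweeps needed to reach any prescribed accuracy, let alone exact tightness, so the assertion that this ``computes $F'$ in polynomial time'' is unsupported. The paper sidesteps this entirely by reducing the construction to a single SOCP (with the standard caveat, stated in a footnote, that the convex program is assumed solvable exactly), whose polynomial-time solvability is off-the-shelf. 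To close your version of the argument you would either need to prove a geometric convergence rate for $T$ on the box $[\bzero,\ell'']$ --- which requires quantitative assumptions you have not stated --- or replace the iteration by the paper's optimization-based construction.
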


The proof uses similar techniques as in  \cite{huang2017sufficient,gan2015exact}, but shows an additional result on the efficient conversion to an optimal solution of {\sc OPF}. 
\ifsupplementary
See the appendix for a proof.
\else
See the supplementary materials for a proof. 
\fi

\vspace*{-10pt}
\section{PTAS for OPF with Discrete Power Demands}\label{sec:ptas}

This section presents a $(1+\epsilon)$-approximation algorithm (PTAS) for OPF. Note that we consider the number of links in the distribution network (i.e., $|\cV^+|=|\cE| = m$) to be a constant, but allow the number of users of discrete demands ($|\cI|$) to be a scalable parameter to the problem. Our PTAS is polynomial in running time with respect to $|\cI|$ or generally~$n$. 

\vspace*{-10pt}
\subsection{Rotational Invariance of OPF}\label{sec:rot}

First, note that OPF is {\em rotational invariant}. That is, if the complex-valued parameters  $(z_e)_{e \in \cE}$ and $(s_k)_{k \in \cN}$ are rotated by the same angle (say $\phi$) and  the objective function $ f(s_0, s)$ is counter-rotated by $\phi$ in $s_0$, then there is a bijection between the rotated OPF and the original unrotated OPF. Define the rotated objective function by $f^\phi(s_0, s) \triangleq  f\big(s_0 e^{-{\bf i}\phi}, s \big)$. 

Formally, rotated OPF is defined as follows:
\begin{align}
&\text{({\sc OPF$^\phi$})} \min_{\substack{s_0, s, S, v, \ell \;\;}}f^\phi(s_0, s)  \notag \\
\text{s.t.} \ &  \raf{p1:con1},\raf{p1:con3},\raf{p1:con5}-\raf{p1:con9} \notag \\
&  S_{i,j}=  \sum_{k \in \cU_j} s_k e^{{\bf i}\phi}  + \sum_{l:(j,l)\in \cE} S_{j,l} + z_{i,j}e^{{\bf i}\phi}\ell_{i,j}, & \forall (i,j) \in \cE  \label{p1:rcon2}  \\
&	v_j = v_i + |z_{i,j}|^2 \ell_{i,j} - 2 \re(z_{i,j}^\ast e^{-{\bf i}\phi}  S_{i,j}),  & \forall (i,j) \in \cE \label{p1:rcon4} 
\end{align}
 Similarly, we define a rotated version of {\sc cOPF} as {\sc cOPF$^\phi$}.

\begin{theorem}\label{thm:rot}
There is a bijection between {\sc OPF$^\phi$} and {\sc OPF}. 
Also, there is a bijection between {\sc cOPF$^\phi$} and {\sc cOPF}.		
\end{theorem}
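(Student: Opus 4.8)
The plan is to exhibit an explicit invertible map between feasible solutions and to check that it preserves both feasibility and the objective value. Given a feasible point $F=(s_0,s,S,v,\ell)$ of {\sc OPF}, I define its image $F^\phi=(s_0^\phi,s^\phi,S^\phi,v^\phi,\ell^\phi)$ by rotating only the complex power quantities while leaving the real magnitudes and the demand vector untouched:
\[
s_0^\phi = s_0\,e^{{\bf i}\phi},\qquad S_{i,j}^\phi = S_{i,j}\,e^{{\bf i}\phi},\qquad s^\phi=s,\quad v^\phi=v,\quad \ell^\phi=\ell .
\]
This map is visibly invertible (rotate back by $-\phi$), so once I show that it sends feasible points of {\sc OPF} to feasible points of {\sc OPF$^\phi$} and vice versa, the bijection follows. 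Note in particular that the feasibility constraints on the demands, $s_k\in\cS_k$ (con8.1) and $s_k\in\hat\cS_k$ (con8.2), are literally the same in both problems, which is exactly why the map can keep $s^\phi=s$.

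Second, I would verify the constraints one by one, the point being that every rotation factor $e^{\pm{\bf i}\phi}$ cancels. For con1, $|S_{i,j}^\phi|^2=|S_{i,j}e^{{\bf i}\phi}|^2=|S_{i,j}|^2$, so con1 of {\sc OPF$^\phi$} reduces to con1 of {\sc OPF}. For rcon2, every term on the right carries exactly one factor $e^{{\bf i}\phi}$ (the demands $s_k e^{{\bf i}\phi}$, the child flows $S_{j,l}^\phi=S_{j,l}e^{{\bf i}\phi}$, and the loss $z_{i,j}e^{{\bf i}\phi}\ell_{i,j}$), matching $S_{i,j}^\phi=S_{i,j}e^{{\bf i}\phi}$ on the left, so dividing by $e^{{\bf i}\phi}$ recovers con2. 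For rcon4 the key identity is $\re\!\big(z_{i,j}^{*}e^{-{\bf i}\phi}S_{i,j}^\phi\big)=\re\!\big(z_{i,j}^{*}e^{-{\bf i}\phi}S_{i,j}e^{{\bf i}\phi}\big)=\re(z_{i,j}^{*}S_{i,j})$, so rcon4 reduces to con4. Con3 follows from $S_{0,1}^\phi=S_{0,1}e^{{\bf i}\phi}=-s_0e^{{\bf i}\phi}=-s_0^\phi$, and con5, con7, con9 are unchanged because $v$ and $\ell$ are unchanged. Finally the objective is preserved: $f^\phi(s_0^\phi,s^\phi)=f(s_0^\phi e^{-{\bf i}\phi},s)=f(s_0,s)$, so feasibility and objective value transfer in both directions, and the bijection restricts to a bijection between optimal solutions.

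The one step needing care is the power-capacity constraint con6, specifically the reverse-flow term $|S_{i,j}-z_{i,j}\ell_{i,j}|\le\overline S_{i,j}$. Under the map its left-hand side becomes $|S_{i,j}e^{{\bf i}\phi}-z_{i,j}\ell_{i,j}|$, which does not equal $|S_{i,j}-z_{i,j}\ell_{i,j}|$ unless the loss term is also rotated. This is consistent once the rotation is read as applying to the parameter $z_{i,j}$ wherever it occurs (exactly as in rcon2), so that the reverse-flow magnitude is $|S_{i,j}^\phi-z_{i,j}e^{{\bf i}\phi}\ell_{i,j}^\phi|=|e^{{\bf i}\phi}|\,|S_{i,j}-z_{i,j}\ell_{i,j}|=|S_{i,j}-z_{i,j}\ell_{i,j}|$, while the forward term $|S_{i,j}^\phi|=|S_{i,j}|$ is automatic. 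I would therefore make explicit that con6 in {\sc OPF$^\phi$} carries the rotated loss term, which is the natural reading since $z_{i,j}\ell_{i,j}=z_{i,j}|I_{i,j}|^2$ is the same physical loss that is rotated in rcon2.

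Finally, the argument for {\sc cOPF} versus {\sc cOPF$^\phi$} is identical, since the only differences from {\sc OPF} are that the equality in con1 is relaxed to the inequality $\ell_{i,j}\ge|S_{i,j}|^2/v_i$ — still invariant under $S\mapsto Se^{{\bf i}\phi}$ because the modulus is — and the discrete description $s_k=\overline s_k x_k$, $x_k\in\{0,1\}$, which is untouched because $s^\phi=s$ and the binary variables $x_k$ are carried over unchanged. Hence the same rotation map gives the required bijection between {\sc cOPF$^\phi$} and {\sc cOPF}.
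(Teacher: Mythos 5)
Your proof is correct and is essentially the paper's own argument: the paper's one-sentence proof uses exactly the inverse of your map, sending a feasible point of {\sc OPF$^\phi$} to one of {\sc OPF} via $\widetilde S_{i,j}=S_{i,j}e^{-{\bf i}\phi}$, $\widetilde s_0=s_0e^{-{\bf i}\phi}$ with $s,v,\ell$ unchanged, and the same modulus-cancellation checks you carry out. Your extra care with the reverse-flow term in \raf{p1:con6} addresses a point the paper glosses over (it lists \raf{p1:con6} verbatim among the constraints of {\sc OPF$^\phi$} and elsewhere relies on {\sf A1}/{\sf A4$'$} to make the reverse constraint redundant after rotation); reading the loss term as $z_{i,j}e^{{\bf i}\phi}\ell_{i,j}$ is the correct interpretation for the bijection to hold without assumptions.
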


Theorem~\ref{thm:rot} can be proved by showing that a feasible solution $F=(s_0,s,S,v,\ell)$ of  {\sc OPF$^\phi$} can be mapped to a feasible solution ${\widetilde F}=({\widetilde s}_0, {\widetilde s},{\widetilde S},v,\ell)$ of OPF, where ${\widetilde S}_{i,j}\triangleq S_{i,j} e^{-{\bf i}\phi}, {\widetilde s}_0\triangleq s_0 e^{-{\bf i} \phi}$, and vise versa. Similarly, it holds for {\sc cOPF$^\phi$} and {\sc cOPF}.

\begin{figure}[!htb]\vspace*{-10pt}
	\begin{center}
		\includegraphics[scale=.5]{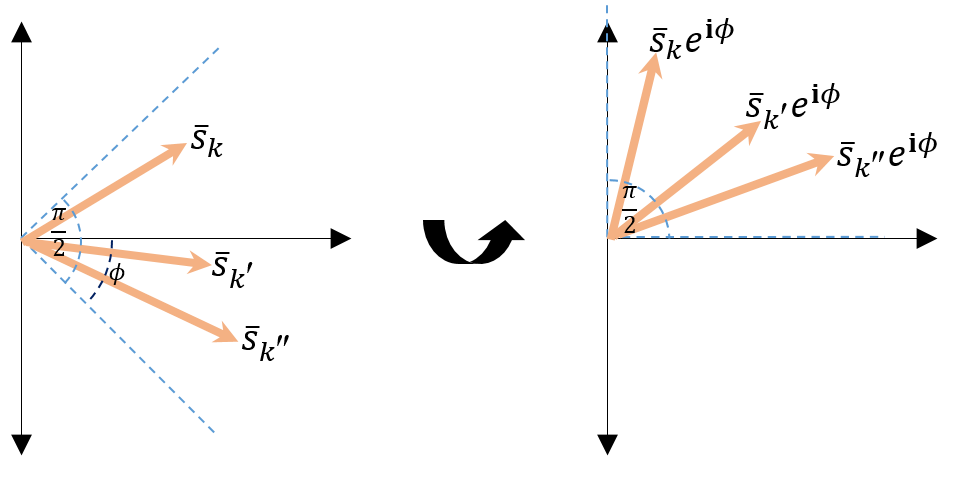}
	\end{center}\vspace{-15pt}
	\caption{Angle $\phi$ is the minimum angle of rotation to rotate $(\overline s_k)_{k \in \cI}$ into the first quadrant.}\vspace*{-15pt}
	\label{fig:rotate}
\end{figure}

\subsection{PTAS Algorithm}
\begin{figure}[!htb]\vspace*{-20pt}
	\begin{center}
		\includegraphics[scale=.38]{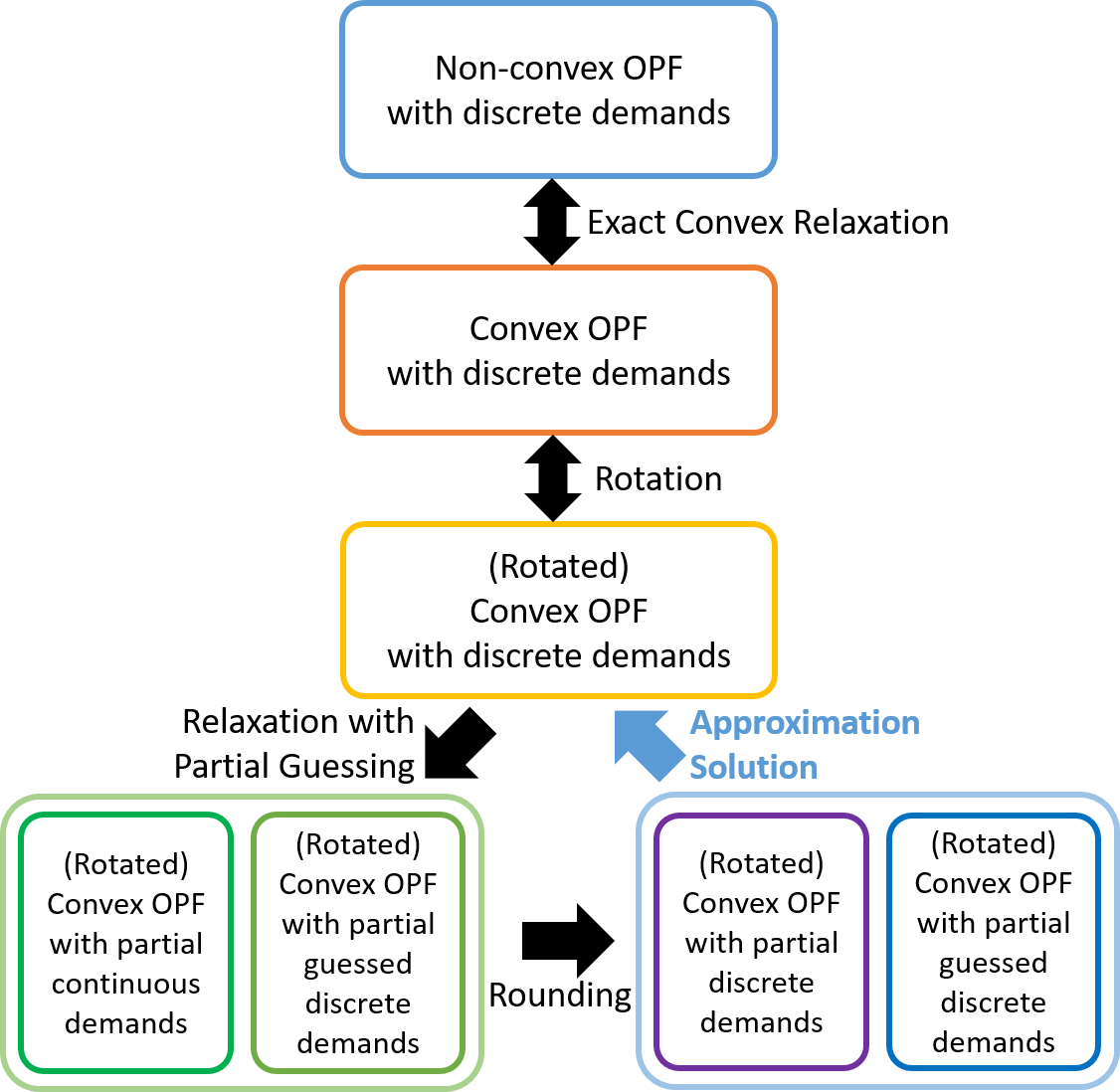}
	\end{center}\vspace*{-5pt}
	\caption{Basic steps of PTAS for OPF.}
	\label{fig:chart}\vspace*{-10pt}
\end{figure} 

Therefore, in the rest of paper, it is more convenient to consider {\sc cOPF$^\phi$}  instead of {\sc cOPF}, where $\phi \triangleq \max\big\{\max_{k\in \cI} \{-\angle \overline s_k\},0\big\}\in[0,\frac{\pi}{2}]$.  Namely, $\phi$ is the minimum angle in order to rotate all the discrete demands from the fourth quadrant to  the first quadrant (see Fig.~\ref{fig:rotate}).
%
%
%
%
Theorem~\ref{thm:rot} allows us to replace assumptions~{\sf A0} and {\sf A4} by the following assumptions:
\begin{itemize}
	\item[{\sf A0$'$}:] $f_0(-s_0^{\rm R} \cos \phi - s_0^{\rm I} \sin \phi)$ is  non-decreasing in $-s_0^{\rm R}, -s_0^{\rm I}$.
	\item [{\sf A4$'$:}] $s_k \ge 0$ for all $k\in \cN$. This is because all demand sets satisfying {\sf A4} are now in the first quadrant after the rotation by $\phi$.
\end{itemize}

Note that assumption  {\sf A1} continues to hold for {\sc OPF$^\phi$}, assuming the original \textsc{OPF} problem satisfies {\sf A3}:  $z_e e^{{\bf i}\phi}\ge 0,  \forall e\in\cE$. This is because of {\sf A3}, namely, $\re(z_{e}^\ast \overline s_k)\ge 0,  \forall k\in \cN,~e\in \cE$, such that the phase angle difference between $z_e$ and $\overline s_k$ is at most $\frac{\pi}{2}$. 
Note also that  {\sf A1}  and {\sf A4$'$} already imply {\sf A3}.

This section presents a PTAS for solving {\sc cOPF$^\phi$}. Together with Theorems~\ref{thm:exact} and \ref{thm:rot}, one can solve OPF by a PTAS.
The basic steps of PTAS are illustrated in Fig.~\ref{fig:chart}. After convex relaxation and rotation, we enumerate possible partial guesses for configuring the control variables of a small subset of  discrete demands. For each guess, we solve the remaining subproblem by relaxing the other discrete control variables to be continuous control variables, and then rounding the continuous control variables to obtain a feasible solution. This algorithm can attain a parameterized approximation ratio by carefully adjusting the number of partial guesses and rounding.


A formal description of the PTAS algorithm (named {\sc PTAS-cOPF})  is presented as follows: 

\begin{enumerate}

\item First, define a {\em partial guess} by $I_1,I_0\subseteq\cI$, such that $I_1 \cap  I_0 = \varnothing$. For each guess, we set $x_k = 1, \forall k \in I_1$ and  $x_k = 0, \forall k \in I_0$. 

\item Define a variant of  {\sc cOPF$^\phi$} with partially pre-configured and partially relaxed discrete control variables, denoted by {\sc P1$[I_0,I_1]$}, as follows:
\begin{align}
&\text{(P1$[I_0,I_1]$)}\quad \min_{\substack{s_0, s, x, S, v, \ell \;\;}} f^\phi(s_0, s)  \notag \\
\text{s.t.} \ & \raf{p1:con3},\raf{p1:con5}-\raf{p1:con8.1},\raf{p1:con9},\raf{p2:con1},\raf{p1:rcon2},\raf{p1:rcon4}, {\sf C2} \notag\\
& s_k = \overline{s}_k x_k, \ \forall k \in \cI, \\
& x_k = 1, \ \forall k \in I_1, \\
&  x_k = 0, \ \forall k \in I_0, \\
& x_k \in [0,1], \ \forall k \in \cI \backslash (I_0 \cup I_1) 
\end{align}
Note that {\sc P1$[I_0,I_1]$} is a convex programming problem and is solvable in polynomial time. Then, obtain an optimal solution of {\sc P1$[I_0,I_1]$}, denoted by $F'=\big(s'_0, s', x', S', v', \ell'\big)$. $F'$ may not satisfy the discrete demand constraints \raf{p1:con8.2} in {\sc cOPF$^\phi$}.  Next, $F'$ will be rounded to obtain a feasible solution to  {\sc cOPF$^\phi$}.

\item Define $\cI'\triangleq \cI \setminus (I_0 \cup I_1)$, and $\overline{f}_k \triangleq f_k(0)$ for $k \in \cI$. Define {\sc P2$[F',\cI']$} as a sub-problem of selecting a subset of discrete control variables ($(x_k)_{k \in \cI'}$) for rounding, based on the respective objective values, as follows\footnote{Note that the part of the objective function corresponding to discrete users can be made linear, without loss of generality, as $f_k(s_k^{\rm R})$ takes only one of two values $\overline f_k=f_k(0)$, if $x_k=0$, or  $0$, otherwise.   }:
\begin{align}
&\qquad \text{(P2$[F',\cI']$)} \quad  \min_{\substack{ x_k\in[0,1], k \in \cI'}} \sum_{k \in \cI} \overline{f}_k( 1- x_k)  \notag \\
 &\text{s.t.\ } 0 \le \sum_{k \in \cN}  \re\Big( \sum_{(h,l)\in \cP_k \cap \cP_j} z^*_{h,l} {s}_k\Big)   \label{r2:con0} \\
  & \qquad  \le\sum_{k \in \cN}  \re\Big( \sum_{(h,l)\in \cP_k \cap \cP_j} z^*_{h,l} s'_k\Big), \forall j \in \cV^+ \label{r2:con1}\\
&\sum_{k \in \cN_j} \re({s}_k e^{{\bf i}\phi})  \le \sum_{k \in \cN_j } \re(s'_k e^{{\bf i}\phi}), \forall j \in \cV^+ \label{r2:con2}\\
&\sum_{k \in \cN_j} \im ({s}_k e^{{\bf i}\phi})  \le  \sum_{k \in \cN_j } \im(s'_k e^{{\bf i}\phi}), \forall j \in \cV^+ \label{r2:con3} \\
& s_k = \overline s_k  x_k, \ \forall k\in\cI', \quad  s_k = s'_k, \ \forall \cN \backslash \cI' \label{r3:con4}
\end{align}
Note that {\sc P2$[F',\cI']$} is a linear programming problem.

\item Suppose $(x''_k)_{k\in \cI'}$ is an optimal solution of {\sc P2$[F',\cI']$}. Each $x''_k$ is rounded to an integral solution such that
\begin{equation}
	\hat x_k = \left\{\begin{array}{ll}
	\lfloor x''_k \rfloor, & \mbox{if\ } k \in \cI',\\
	    x'_k,  &   \mbox{if\ } k\in \cI\setminus \cI'
	\end{array}\right.
\end{equation}

\item Then, obtain the corresponding $\hat s_0,\hat s,\hat S,\hat\ell,\hat v$ by \text{P3$[\hat x,s']$}:
\begin{align}
\text{(P3$[\hat x,s'$])}& \min_{\substack{\hat s_0, \hat s, x, \hat S, \hat v, \hat \ell \;\;}} f^\phi(s_0, s) \notag &\\
\text{s.t.} \ & \raf{p1:con3},\raf{p1:con5}-\raf{p1:con7},\raf{p1:con9},\raf{p2:con1},\raf{p1:rcon2},\raf{p1:rcon4},{\sf C2} \notag\\
&s_k = s'_k, \ \forall k\in \cF, \\
&s_k = \overline s_k \hat x_k, \ \forall k\in \cI
\end{align}
Note that {\sc P3$[\hat x,s'$]} is a convex programming problem.

\item The output solution will be the one having the maximal objective value among all guesses.

\end{enumerate}
The pseudo-codes of  {\sc PTAS-cOPF} are given in  Algorithm~\ref{alg:ptas}. 

\begin{algorithm}[!htb]
\caption{ {\sc PTAS-cOPF} \label{alg:ptas} }
\begin{algorithmic}[1]
\Require $\epsilon, v_0,(\underline v_j, \overline v_j)_{j\in \cV^+}, (\overline S_e, \overline  \ell_e, z_e)_{e\in \cE}, (\underline s_k, \overline s_k)_{k\in \cN}$
\Ensure	Solution $\hat F=(\hat s_0, \hat s, \hat x, \hat S, \hat v, \hat \ell)$ to {\sc cOPF$^\phi$}
\State $f_{\min} \leftarrow \infty$
\For {each set  $I_0\subseteq \cI$ such that $|I_0| \le \frac{4m}{\epsilon}$ } \label{alg:guess} 
\State $I_1\leftarrow\big\{k\in\cI\setminus I_0 \mid \overline f_{k} >\min_{k'\in I_0}\{\overline f_{k'}\}\big\}$ \label{alg:mcv}

\State $\cI' \leftarrow \cI \backslash (I_0 \cup I_1)$
\If{{\sc P1$[I_0, I_1]$} is feasible} \label{alg:mc-feas}

\State $F' \leftarrow$  Optimal solution of {\sc P1$[I_0, I_1]$} \label{alg:cvx}
\State $(x''_k)_{k\in \cI'} \leftarrow$ Optimal solution of {\sc P2}$[F',\cI']$ \label{alg:mclp}
\State$ (\hat x_k)_{k\in \cN} \leftarrow  \big( (\lfloor x''_k \rfloor)_{k\in \cI'}, ( x'_k )_{k\in \cN\setminus\cI'} \big)$  \label{alg:mc-round}
\State $(\hat s_0, \hat s, \hat S, \hat v, \hat \ell)\leftarrow$  Optimal solution of {\sc P3$[\hat x,s']$}  \label{alg:r3}
\If{$f_{\min} >  f^\phi(\hat s_0, \hat s)$ }
\State $ \hat F \leftarrow (\hat s_0, \hat s, \hat x, \hat S, \hat v, \hat \ell), \ f_{\min} \leftarrow  f^\phi(\hat s_0, \hat s)$
\EndIf 
\EndIf

\EndFor
\State \Return $\hat F$
\end{algorithmic}

\end{algorithm}

\vspace*{-10pt}
\subsection{Proving Approximation Ratio}

In this section, the approximation ratio of PTAS will be derived as $(1+\epsilon)$, if one sets the size of partial guesses of satisfiable discrete demands by $|I_0|\le \frac{4m}{\epsilon}$, where $m$ is the number of nodes in distribution networks, and the corresponding $I_0$ by
\begin{equation} \label{cond:I1}
I_1=\big\{k\in\cI\setminus I_0 \mid \overline f_{k} >\min_{k'\in I_0}\{\overline f_{k'}\}\big\}
\end{equation}
Therefore, one can adjust the approximation ratio by limiting the size of $I_0$ in partial guessing.

\begin{remark} \label{r1}{The running time of Algorithm~\ref{alg:ptas} is $O(n^{4m/\epsilon}T)$, where $T$ is the time required to solve the convex programs P1, P2 and P3, (the space requirement is polynomial in $n$).} To speed up {\sc PTAS-cOPF}, one can first compute the optimal objective value (denoted by $\underline{f}$) of {\sc P1} by taking $I_0 = I_1 = \varnothing$, which naturally is a lower bound to that of {\sc cOPF}.  {\sc PTAS-cOPF} will stop and return a solution, if the gap between the solution's objective value and $\underline{f}$ is sufficiently small. Hence, this may skip the partial guessing, if $\underline{f}$ is already closed to the solution of {\sc PTAS-cOPF} without partial guessing (which is often observed in the evaluation in Sec.~\ref{sec:sim}).  
\end{remark}
\begin{theorem}\label{th:opftas} 
With assumptions {\sf A0',A1,A2,A3,A4'}, for any given $\epsilon > 0$, {\sc PTAS-cOPF} provides a $(1+\epsilon)$-approximation solution of {\sc cOPF$^\phi$} in
{\color{black} $O(n^{4m/\epsilon}T)$, where $T$ is the time required to solve the convex programs P1, P2 and P3, assuming $m$ and $\epsilon$ are fixed constants.}
\end{theorem}

\begin{proof}
As mentioned in Remark~\ref{r1}, the running time of {\sc PTAS-cOPF} is polynomial in $n$, for fixed $\epsilon$ and $m$. Next, we show that the output solution $\hat F$ is $(1+ \epsilon)$-approximation for {\sc cOPF$^\phi$}. 

Let $F^\star =  (s_0^\star, s^\star, x^\star, S^\star, v^\star, \ell^\star)$ be an optimal solution of {\sc cOPF$^\phi$}. Define  $I^\star_0\triangleq\{k \in \cI \mid x^\star_k = 0\}$ and $I^\star_1\triangleq\{k \in \cI \mid x^\star_k = 1\}$ be the sets of unsatisfiable and satisfiable discrete demands in $F^\star$, respectively.
There are two cases:
\begin{enumerate}

\item If $|I^\star_0|\le\tfrac{4m}{\epsilon}$, then there exists a partial guess $I_0$, such that $I_0 = I^\star_0$. Namely, {\sc PTAS-cOPF} can find $F^\star$ by enumerating all possible  $I_1$ such that $|I_0|\le \frac{4m}{\epsilon}$. Therefore, it produces the optimal solution of {\sc cOPF$^\phi$}.

\item Otherwise, $|I^\star_0|>\tfrac{4m}{\epsilon}$, then  {\sc PTAS-cOPF} can still find some $I_0$, which is a subset of satisfiable discrete demands in $I^\star_0 $ with a number of $\lfloor\tfrac{4m}{\epsilon}\rfloor$ highest ${\overline f}_k$:
\begin{equation} \label{cond:I0}
I_0 \subseteq I^\star_0  \mbox{\ and\ }  |I_0| = \lfloor\tfrac{4m}{\epsilon}\rfloor  \mbox{\ and\ }  \min_{k \in I_0} \{{\overline f}_k\} \ge \max_{k \in  I^\star_0 \backslash I_0} \{{\overline f}_k\} 
\end{equation}
Next, we assume $I_0$ satisfying \raf{cond:I0} and $I_1$ satisfying \raf{cond:I1}. Note that $I_1\cap I^\star_0 = \varnothing$ (and hence $I_1 \subseteq I^\star_1$). Otherwise, $I_0\cap I_1 \ne \varnothing$, because of \raf{cond:I0}.
\end{enumerate}

Then, we focus on case 2. Let us consider $F'  = (s_0', s', x', S', v', \ell')$, which is an optimal solution of {\sc P1$[I_0,I_1]$}.  Since $I_0 \subseteq I^\star_0$ and $I_1 \subseteq I^\star_1$, it follows that
\begin{equation}  \label{eq:mc-cvx}
f^\phi(s'_0, s') \le  f^\phi(s_0^\star, s^\star)
\end{equation}

Next, let us consider $(x''_k)_{k\in \cI}$, which is an optimal solution of {\sc P2}$[F',\cI']$. Note that $(x'_k)_{k\in \cI}$ is also a feasible solution to {\sc P2}$[F',\cI']$ (where Cons.~\raf{r2:con1}-\raf{r2:con3} are tight, and $s'$ satisfies {\sf C2}, and hence, \raf{r2:con0}). It follows that 
\begin{equation}  \label{eq:mc-lp}
\sum_{k \in \cI}\overline f_k (1-x''_k)   \le \sum_{k \in \cI}\overline f_k (1-x'_k)
\end{equation}

Note that {\sc P2}$[F',\cI']$ is a linear programming problem. By Lemma~\ref{lem:lp-bfs}, at most  $4m$ components in $(x''_k)_{k\in \cI'}$ are fractional. For each fractional component, say  $k\in\cI'$ , one obtains $\overline f_{k} \le \min_{k' \in I_0}\{ \overline f_{k'}\}$. Otherwise, $k \in I_1$ by  \raf{cond:I1}. Hence,
\begin{equation}\label{eq:fk_I0}
\overline f_{k} \le \min_{k' \in I_0}\{ \overline f_{k'}\} \le \frac{1}{|I_0|} \sum_{k' \in I_0} \overline f_{k'}
\end{equation}	
Therefore, rounding down $x''$ for at most  $4m$ components (i.e., $\hat x_k = \lfloor x''_k \rfloor$ for $k \in \cI'$), by \raf{eq:fk_I0} one obtains
\begin{align}
  \sum_{k\in \cI} \overline f_k(1- x''_k)  \ge & \sum_{k\in \cI} \overline f_k (1-\hat x_k) - \sum_{k\in \cI:0<x''_k<1} \overline f_k \notag  \\
  \ge &  \sum_{k\in \cI} \overline f_k (1-\hat x_k) - \frac{4m}{|I_0|} \sum_{k \in I_0} \overline f_{k} \notag \\
  \ge &  \sum_{k\in \cI} \overline f_k (1-\hat x_k) - \frac{4m}{|I_0|} \sum_{k \in \cI} \overline f_{k} (1-x''_k) \notag \\
  \ge &  \sum_{k\in \cI} \overline f_k (1-\hat x_k) - \epsilon \sum_{k \in \cI} \overline f_{k} (1-x''_k)  \label{eq:hatlp}
\end{align}	

By \raf{eq:mc-lp}, \raf{eq:hatlp} and non-negativity of $f_0, f_k$, it follows that
\begin{align}
  \sum_{k\in \cI} \overline f_k(1-\hat x_k) \le &  (1+\epsilon) \sum_{k \in \cI}\overline f_k(1-x'_k) \notag \\
  \le  & \sum_{k \in \cI}\overline f_k (1-x'_k)  + \epsilon f^\phi(s_0', s') \label{eq:1-e}
\end{align}

Finally, by \raf{eq:1-e} and Lemma~\ref{feas} below, one obtains
$$
f^\phi(\hat s_0, \hat s) \le  (1+\epsilon) f^\phi(s_0', s') \le(1+\epsilon) f^\phi(s_0^\star, s^\star)$$
Hence, this completes the proof.
\end{proof}



\vspace*{-10pt}
\begin{lemma}[see \cite{lpbook}] \label{lem:lp-bfs}
Consider linear programming problem:
\begin{align}
\min_{{\bf x} \in [0, 1]^n} \; & {\bf c}^T \cdot {\bf x}\\
\mbox{s.t.} \;\;& {\bf A} {\bf x} \le {\bf b} 
\end{align}
where ${\bf A}$ is an $n\times r$ matrix. Then there exists an optimal solution ${\bf x}^\star$ {\color{black} that can be computed in polynomial time} such that at most $r$ components are fractional. Namely, $|\{ i = 1, ..., n\} \mid x^\star_i \in (0,1)| \le r$.
\end{lemma}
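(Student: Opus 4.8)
The plan is to reduce the statement to the standard fact that a bounded linear program attains its optimum at a vertex (basic feasible solution) of its feasible polytope, and then to bound the number of fractional coordinates of such a vertex by counting linearly independent active constraints. I assume throughout that the feasible region is nonempty, as is the case in the application to {\sc P2}$[F',\cI']$, where the solution $x'$ is feasible.

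First I would observe that the feasible region $P\triangleq\{{\bf x}\in\RR^n : {\bf A}{\bf x}\le{\bf b},\ 0\le x_i\le 1\ \text{for all } i\}$ is contained in the cube $[0,1]^n$ and is therefore a bounded polyhedron, i.e.\ a polytope. The linear objective ${\bf c}^T{\bf x}$ is continuous on the compact set $P$ and hence attains its minimum; by the fundamental theorem of linear programming this minimum is attained at a vertex of $P$. Let ${\bf x}^\star$ be such an optimal vertex, and let $f\triangleq|\{i : x^\star_i\in(0,1)\}|$ count its fractional coordinates.

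Next I would invoke the algebraic characterization of a vertex: since $P\subseteq\RR^n$, a feasible point is a vertex precisely when the constraints of $P$ that are active (tight) at it include $n$ constraints with linearly independent gradients. The constraints split into the $r$ inequalities ${\bf A}{\bf x}\le{\bf b}$ and the $2n$ box inequalities $x_i\ge 0$ and $x_i\le 1$. The crucial point is that a fractional coordinate contributes no active box constraint: if $x^\star_i\in(0,1)$ then neither $x_i\ge 0$ nor $x_i\le 1$ is tight. Hence the active box constraints involve only the $n-f$ integral coordinates, and since the gradients $\pm e_i$ of box constraints for distinct coordinates are linearly independent (and the two box constraints for a single coordinate cannot both be active), the active box constraints supply at most $n-f$ linearly independent gradients, spanning only the coordinate subspace indexed by the integral coordinates.

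The hard part, though routine once set up, is the linear-algebra bookkeeping that turns this into the bound. To reach the full rank $n$ demanded by the vertex characterization, the active box gradients (spanning a space of dimension at most $n-f$) must be supplemented by at least $f$ further linearly independent gradients, and these can only come from the rows of ${\bf A}$. Since there are only $r$ such rows, I conclude $f\le r$, which is exactly the assertion that at most $r$ components of ${\bf x}^\star$ are fractional. No case analysis or computation beyond this rank count is needed.
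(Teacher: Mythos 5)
Your argument is correct and is exactly the standard basic-feasible-solution (vertex) argument that the paper invokes in one line ("follows from the properties of basic feasible solutions"): an optimal vertex has $n$ linearly independent active constraints, fractional coordinates activate no box constraints, so at least $f$ of the $r$ rows of ${\bf A}$ must be active and independent, giving $f\le r$. The only cosmetic point is that the paper writes ${\bf A}$ as ``$n\times r$'' while your count treats $r$ as the number of rows of ${\bf A}$ (i.e.\ the number of constraints), which is clearly the intended reading given how the lemma is applied to {\sc P2} with $4m$ constraints.
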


Proof of Lemma~\ref{lem:lp-bfs} follows from the properties of basic feasible solutions of linear programming problems.

\begin{lemma}\label{feas}
Let $F'=\big(s'_0, s', x',S',v',\ell'\big)$ be a feasible~solution of {\sc P1$[I_0,I_1]$}.
If $\hat x_k\in[0,1], \forall k \in \cI$ satisfies the following:
\begin{align}
&\sum_{k\in\cI}\overline f_k(1-\hat x_k)\le \sum_{k\in\cI}\overline f_k(1-x_k')+\epsilon f^\phi(s'_0, s'), \label{feas:con0}\\
&0 \le \sum_{k \in \cN}  \re\Big( \sum_{(h,l)\in \cP_k \cap \cP_j} z^*_{h,l} \hat s_k\Big) \label{feas:con1.1} \\
& ~ \le \sum_{k \in \cN}  \re\Big( \sum_{(h,l)\in \cP_k \cap \cP_j} z^*_{h,l} s'_k\Big),\ \forall j \in \cV^+  \label{feas:con1}\\
&\sum_{k \in \cN_j} \re(\hat s_k e^{{\bf i}\phi})  \le \sum_{k \in \cN_j} \re(s'_k e^{{\bf i}\phi}),\ \forall  j \in \cV^+ \label{feas:con2}\\
&\sum_{k \in \cN_j} \im(\hat s_k e^{{\bf i}\phi})  \le\sum_{k \in \cN_j} \im(s'_k e^{{\bf i}\phi}), \ \forall j \in \cV^+ \label{feas:con3}\\
&\hat s_k = \overline s_k \hat x_k, \ \forall k\in\cI, \quad \hat s_k = s'_k, \ \forall k\in\cF \\
& \hat x_k= x'_k, \ \forall k\in\cI \backslash \cI' \label{feas:con4}
\end{align}
Then, with assumptions {\sf A0',A1,A2,A3,A4'}, there exists a feasible solution $\hat F=\big(\hat s_0, \hat s, \hat x,  \hat  S, \hat v, \hat \ell_e\big)$ of \textsc{P3$[\hat x,s']$}, such that $f^\phi(\hat s_0, \hat s) \le  (1+\epsilon) f^\phi(s_0', s')$.
\end{lemma}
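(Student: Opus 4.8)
The plan is to construct an explicit feasible point of \textsc{P3$[\hat x,s']$} and bound its objective, not to analyse an optimal one. The key trick is to \emph{reuse the currents of $F'$}: set $\hat\ell_{i,j}:=\ell'_{i,j}$ on every edge, let $\hat S$ be determined by the balance equations \raf{p1:rcon2} (evaluated from the leaves to the root) using demand $\hat s$ and these fixed currents, let $\hat v$ be determined by the voltage recursion \raf{p1:rcon4} (from the root with $\hat v_0=v_0$), and put $\hat s_0:=-\hat S_{0,1}$. First I would record the domination of power flows. Because $\hat\ell=\ell'$, the line-loss terms in the two balance recursions are identical, so $S'_{i,j}-\hat S_{i,j}=\sum_{k\in\cN_j}(s'_k-\hat s_k)e^{{\bf i}\phi}$; by \raf{feas:con2} and \raf{feas:con3} this has non-negative real and imaginary parts, and since {\sf A1} and {\sf A4'} keep $\hat S_{i,j}$ itself in the first quadrant, I get the componentwise bounds $0\le\re(\hat S_{i,j})\le\re(S'_{i,j})$ and $0\le\im(\hat S_{i,j})\le\im(S'_{i,j})$, hence $|\hat S_{i,j}|\le|S'_{i,j}|$.

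Second I would prove the voltage comparison $\hat v_j\ge v'_j$ for all $j$, by induction along each path $\cP_j$ from the root. Again because $\hat\ell=\ell'$, the $|z_{i,j}|^2\ell$ terms cancel in \raf{p1:rcon4}, leaving $\hat v_j-v'_j=(\hat v_i-v'_i)+2\,\re\!\big((z_{i,j}e^{{\bf i}\phi})^*(S'_{i,j}-\hat S_{i,j})\big)$; as $z_{i,j}e^{{\bf i}\phi}\ge0$ is real and $\re(S'_{i,j}-\hat S_{i,j})\ge0$, each increment is non-negative. With the flow and voltage dominations in hand, every constraint of \textsc{P3} follows: the thermal limit \raf{p1:con7} holds since $\hat\ell=\ell'\le\overline\ell$; the relaxed cone constraint \raf{p2:con1} holds because $|\hat S_{i,j}|^2/\hat v_i\le|S'_{i,j}|^2/v'_i\le\ell'_{i,j}=\hat\ell_{i,j}$ (using $|\hat S|\le|S'|$, $\hat v_i\ge v'_i$, and feasibility of $F'$); the capacity bound \raf{p1:con6} holds since $|\hat S_{i,j}|\le|S'_{i,j}|\le\overline S_{i,j}$, its reverse form being implied under {\sf A1},{\sf A4'}; the lower voltage bound in \raf{p1:con5} holds as $\hat v_j\ge v'_j\ge\underline v_j$; the upper bound is automatic because the per-edge drop $2\re((z_{i,j}e^{{\bf i}\phi})^*\hat S_{i,j})-|z_{i,j}|^2\hat\ell_{i,j}$ expands into a sum of non-negative terms, giving $\hat v_j\le v_0<\overline v_j$ by {\sf A2}; and {\sf C2} holds because every summand satisfies $\re(z_{h,l}^*\hat s_k)\ge0$ under {\sf A1},{\sf A3},{\sf A4'} (for $k\in\cI$ it equals $\hat x_k\,\re(z_{h,l}^*\overline s_k)$ with $\hat x_k\ge0$), conditions \raf{feas:con1.1}--\raf{feas:con1} recording the path-summed form of this non-negativity.

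For the objective, since $\hat x$ is integral the discrete-demand cost of $\hat F$ equals $\sum_{k\in\cI}\overline f_k(1-\hat x_k)$ and the continuous-demand cost is unchanged ($\hat s_k=s'_k$ on $\cF$); hypothesis \raf{feas:con0} then bounds the discrete part by the demand cost of $F'$ plus $\epsilon f^\phi(s_0',s')$. For the supply term, the root-edge domination gives $\re(\hat S_{0,1})\le\re(S'_{0,1})$ and $\im(\hat S_{0,1})\le\im(S'_{0,1})$, so the real power drawn at the substation $-\re(\hat s_0e^{-{\bf i}\phi})$ does not exceed $-\re(s'_0e^{-{\bf i}\phi})$ (using $\cos\phi,\sin\phi\ge0$ for $\phi\in[0,\tfrac{\pi}{2}]$), whence {\sf A0'} makes the supply cost of $\hat F$ at most that of $F'$. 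Adding the two parts gives $f^\phi(\hat s_0,\hat s)\le f^\phi(s_0',s')+\epsilon f^\phi(s_0',s')=(1+\epsilon)f^\phi(s_0',s')$.

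The main obstacle is the feasibility half, and in particular the two monotonicity claims $|\hat S_{i,j}|\le|S'_{i,j}|$ and $\hat v_j\ge v'_j$: both rest on the first-quadrant structure furnished by {\sf A1},{\sf A3},{\sf A4'}, which is exactly what upgrades the componentwise real/imaginary domination of $S'-\hat S$ to a magnitude bound and forces the per-edge voltage drops to be non-negative. This is also why the result breaks for continuous \emph{supplies} (negative real parts), as noted in the footnote: then $S'_{i,j}-\hat S_{i,j}$ may leave the first quadrant, $|\hat S_{i,j}|\le|S'_{i,j}|$ can fail, and the cone and capacity checks collapse. The one structural point that must be justified is that fixing $\hat\ell=\ell'$ instead of the exact current $|\hat S|^2/\hat v$ is legitimate; it is, because \textsc{P3} only imposes the relaxed inequality \raf{p2:con1}, and it is precisely this choice that cancels the loss terms, decouples the recursions, and removes any need for a nonlinear fixed-point argument.
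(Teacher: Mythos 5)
Your proof is correct and is essentially the paper's own argument: you use the identical construction ($\hat\ell=\ell'$, $\hat S$ rebuilt from the balance equations with the fixed currents, $\hat v$ from the voltage recursion, $\hat s_0=-\hat S_{0,1}$), the same componentwise domination $\hat S_{i,j}\le S'_{i,j}$ obtained from \raf{feas:con2}--\raf{feas:con3}, the same chain $|\hat S_{i,j}|\le |S'_{i,j}|\le\overline S_{i,j}$, $\hat v_j\ge v'_j\ge\underline v_j$, $\hat\ell_{i,j}=\ell'_{i,j}\ge |S'_{i,j}|^2/v'_i\ge|\hat S_{i,j}|^2/\hat v_i$, and the same use of {\sf A0$'$} together with \raf{feas:con0} for the objective. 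Two of your steps deserve comment. (i) For $\hat v_j\ge v'_j$ you telescope edge-by-edge instead of invoking hypothesis \raf{feas:con1} on the unfolded path formula as the paper does; this is a valid variant (it in fact shows that \raf{feas:con1} is implied by \raf{feas:con2}--\raf{feas:con3} under {\sf A1}, and \raf{feas:con1.1} by {\sf A1}, {\sf A3}, {\sf A4$'$}), but your stated justification that ``$z_{i,j}e^{{\bf i}\phi}\ge 0$ is real'' is wrong: {\sf A1}/{\sf A3} only place $z_{i,j}e^{{\bf i}\phi}$ in the closed first quadrant. The step nevertheless survives, because you have already established that both $\re(S'_{i,j}-\hat S_{i,j})\ge 0$ and $\im(S'_{i,j}-\hat S_{i,j})\ge 0$, so $\re\bigl((z_{i,j}e^{{\bf i}\phi})^*(S'_{i,j}-\hat S_{i,j})\bigr)=\re(z_{i,j}e^{{\bf i}\phi})\re(S'_{i,j}-\hat S_{i,j})+\im(z_{i,j}e^{{\bf i}\phi})\im(S'_{i,j}-\hat S_{i,j})\ge 0$ follows from the first-quadrant property alone. (ii) Your objective bound assumes $\hat x$ is integral, whereas the lemma only requires $\hat x_k\in[0,1]$; the fix is one line, since convexity of $f_k$ and $f_k(\re(\overline s_k))=0$ give $f_k(\hat x_k\re(\overline s_k))\le(1-\hat x_k)\overline f_k$, after which your chain goes through unchanged (this is also what the paper implicitly uses). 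A small bonus of your write-up over the paper's: you explicitly verify constraint {\sf C2} and the reverse capacity bound in \raf{p1:con6} for $\hat F$, which the paper leaves implicit.
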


\noindent
\begin{proof}
	First, we aim to construct a feasible solution $\hat F=(\hat s_0, \hat s, \hat x, \hat S, \hat v, \hat \ell)$ for {\sc P3$[\hat x,s']$}.
	Reformulate Cons.~\raf{p1:rcon4} by recursively substituting $v_j'$, from the root to node $j$, and then recursively substituting $S'_{h,l}$, for each $(h,l)$ on the path from $j$ to the root:
	\vspace*{-10pt}
	\begin{align}
	v_j &= v_0 -2\sum_{ (h,l) \in \cP_{j} } \re(z_{h,l}^*  e^{-{\bf i}\phi} S_{h,l}) +  \sum_{( h,l) \in \cP_{j} } |z_{h,l}|^2  \ell_{h,l} \notag\\
	&=v_0 -2 \sum_{ (h,l) \in \cP_{j} } \re\Big(z^*_{h,l} e^{-{\bf i}\phi}\big(\sum_{k\in \cN_l}  s_k e^{{\bf i}\phi}  + \sum_{e \in \cE_l} z_{e} e^{{\bf i}\phi} \ell_{e} \big)\Big) \notag \\ 
	&~~+   \sum_{ (h,l) \in \cP_{j} } |z_{h,l}|^2  \ell_{h,l} \notag\\
	&=v_0 -2 \sum_{k \in \cN} \re\Big( \sum_{(h,l)\in \cP_k \cap \cP_j} z^*_{h,l} s_k\Big)    -    2\sum_{ (h,l) \in\cP_{j} }|z_{h,l}|^2  \ell_{h,l}  \notag \\ 
	&~~- 2 \sum_{(h,l)\in \cP_j} \re \Big( z_{h,l}^* \sum_{e \in \cE_l} z_{e}  \ell_{e}\Big) +\sum_{ (h,l) \in\cP_{j} }|z_{h,l}|^2  \ell_{h,l}, \notag
	\end{align}
	where the last equality follows from exchanging the summation operators, and $z_{e}^* z_{e} = |z_{e}|^2$.
	Hence, one obtains
	\begin{align}
	&v_j=v_0 -2 \sum_{k \in \cN} \re\Big( \sum_{(h,l)\in \cP_k \cap \cP_j} z^*_{h,l} s_k\Big)  \notag \\
	&~~    - \Big(2 \sum_{(h,l)\in \cP_j} \re \big( z_{h,l}^* \sum_{e \in  \cE_l} z_{e}  \ell_{e}\big) +    \sum_{ (h,l) \in\cP_{j} }|z_{h,l}|^2  \ell_{h,l}\Big)    \label{eq:rec-v}
	\end{align}
	
	Set a feasible solution $\hat F$ as follows:
	\begin{align}
	\hat \ell_{i,j} & = \ell'_{i,j},\\
	\hat S_{i,j} & = \sum_{ k\in\cN_j} \hat s_k e^{{\bf i}\phi} + \sum_{e \in \cE_i}z_{e} e^{{\bf i}\phi}\hat \ell_{e},\\
	\hat v_j & = v_0 - 2 \sum_{k \in \cN} \re\Big( \sum_{(h,l)\in \cP_k \cap \cP_j} z^*_{h,l} \hat s_k\Big)  + \hat{L}_{i,j},
	\end{align}
	where 
	\begin{align}
	& \hat L_{i,j} \triangleq  \notag \\ 
	& - \Big(2 \sum_{(h,l)\in \cP_j} \re \big( z_{h,l}^* \sum_{e \in \cE_l} z_{e} \hat{\ell}_{e}\big) +    \sum_{ (h,l) \in\cP_{j} }|z_{h,l}|^2 \hat{\ell}_{h,l}\Big) \notag
	\end{align}
	Rewriting  $S'_{i,j}$ recursively, one obtains
	\begin{align}
	S'_{i,j}=\sum_{k \in \cN_j} s'_k  e^{{\bf i}\phi} + \sum_{e \in \cE_i}z_e e^{{\bf i}\phi} \ell'_{e}
	\end{align}
	Because of  \raf{feas:con2} and \raf{feas:con3}, it follows that
	\begin{align}
	\hat S_{i,j} \le  \sum_{ k\in\cN_j} s'_k e^{{\bf i}\phi} + \sum_{e \in \cE_i}z_e e^{{\bf i}\phi}  \ell'_{e}  = S'_{i,j} \label{eq:tc'}
	\end{align}
	
	The feasibility of $\hat S_{i,j}$ can be shown as follows: By {\sf A1, A4'}, it follows that\footnote{Note that we need here that all users are consumers.} $\hat S_{i,j} \ge 0$. By \raf{eq:tc'}, one obtains
	\begin{equation}\label{eq:bS}
	|\hat{S}_{i,j}| \le |S'_{i,j}| \le \overline S_{i,j}
	\end{equation}
	
	The feasibility of $\hat v_j$ can be shown as follows:
	By~\raf{feas:con1}, one obtains
	\begin{align}
	\hat v_j  &\ge  v_0 -2 \sum_{k \in \cN} \re\Big( \sum_{(h,l)\in \cP_k \cap \cP_j} z^*_{h,l} s'_k\Big)  + \hat L_{i,j} = v'_j \ge \underline v_j, \label{eq:bV}
	\end{align}
	where the last inequality follows from the feasibility of $v'_j$. Since
	$\hat L_{i,j}\le 0$ (by {\sf A1}), {\sf A2} and \raf{feas:con1.1}, one obtains
	$$\hat v_j \le  v_0 -2 \sum_{k \in \cN} \re\Big( \sum_{(h,l)\in \cP_k \cap \cP_j} z^*_{h,l} \hat s_k\Big) \le v_0 <\overline v_j.$$
	
	The feasibility of $\hat \ell_{i,j}$ can be shown as follows:
	By \raf{eq:bS}, \raf{eq:bV}, one obtains $\hat \ell_{i,j} = \ell_{i,j}' = \frac{|S'_{i,j}|^2}{v'_i} \ge \frac{|\hat S_{i,j}|^2}{\hat v_i}$, hence $\hat \ell_{i,j}$ satisfies Cons.~\raf{p2:con1}. 
	
	Finally, one obtains
	\begin{align}\label{eq:increasing}
	\hat s_0 = - \hat S_{0,1}  \ge - S'_{0,1} = s'_0
	\end{align}
	Because {\sf A1, A4'}, $e^{-{\bf i}\phi}\hat s_0$ and  $e^{-{\bf i}\phi}s'_0$ are in the third quadrant, which is the opposite to $s_k e^{{\bf i}\phi}$ and $ z_e e^{{\bf i}\phi}$ (by \raf{r2:con1},\raf{p1:con3}). Hence,  because of \raf{feas:con0}, \raf{feas:con4} and {\sf A0'} ($f_0(\re(\cdot))$ is non-increasing for parameters in the third quadrant), it follows that
	\begin{align}
	& \sum_{k\in\cI}\overline f_k(1-\hat x_k) + \sum_{k\in\cF} f_k\big(\re( \hat s_k)\big) + f_0\big(\re( e^{-{\bf i}\phi}\hat s_0)\big)  \notag \\
	\le \ & \sum_{k\in\cI}\overline f_k(1-x_k') + \sum_{k\in\cF} f_k\big(\re( s'_k)\big) + f_0\big(\re( e^{-{\bf i}\phi} s'_0)\big ) \notag \\
	& +\epsilon f^\phi(s'_0, s') \\
	\Rightarrow \ & f^\phi(\hat s_0, \hat s) \le  (1+\epsilon) f^\phi(s'_0, s')
	\end{align}
	This completes the proof.
\end{proof}
\vspace*{-5pt}
\begin{remark}
	Intuitively speaking, Theorem~\ref{thm:exact} states that, under the mentioned assumptions, if the relaxed (discrete) problem $\textsc{cOPF}$ has a feasible solution then the original \textsc{OPF} problem has a feasible solution with a smaller or equal objective value. Then step~\ref{alg:mc-feas} of Algorithm~\ref{alg:ptas} checks if the convex relaxation of $\textsc{cOPF}$ indeed has a feasible solution; in such a case, Lemma~\ref{feas} implies that \textsc{OPF} has also a feasible solution.
\end{remark}

\vspace*{-10pt}
\section{Discussion} \label{sec:hard}

While exact convex relaxation of OPF applies to the setting with discrete demands, the efficiency of solving OPF with discrete demands is substantially more challenging than that with only  continuous demands. In prior work  \cite{YC13CKP,khonji2016optimal,chau2016truthful,KCE2014Hard}, we show some fundamental hardness results for OPF with discrete demands. Although the results in \cite{YC13CKP,khonji2016optimal,chau2016truthful,KCE2014Hard} are proven by a slightly different model of maximizing an objective function, with minor modifications to the proofs these results can also be applied to our model of minimizing an objective function.

This paper provides a PTAS for solving OPF with discrete demands. A better alternative to PTAS is a {\em fully polynomial-time approximation scheme} (FPTAS), which requires the running time to be polynomial in both input size $n$ and $1/\epsilon$.

\begin{theorem}[see \cite{YC13CKP}] \label{thm:h0}
Unless {\sc P}={\sc NP}, there exists no {\sc FPTAS} for {\sc OPF} with discrete demands, even for a single-link distribution network $|\cE|=1$  with $z_e = 0$.
\end{theorem}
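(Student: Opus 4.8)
The plan is to prove hardness by a polynomial-time reduction from \textsc{Partition}, a classical \npcomplete problem, and then to observe that an FPTAS, if run at a polynomially small tolerance $\epsilon$, would solve \textsc{Partition} exactly. The first thing I would do is verify that the stated restriction degenerates the branch-flow model into the \emph{complex-demand knapsack problem}. With the unique edge $(0,1)$ and $z_{0,1}=0$, Cons.~\raf{p1:con2}--\raf{p1:con4} collapse to $S_{0,1}=\sum_{k\in\cN}s_k$, $v_1=v_0$, and (by \raf{p1:con3}) $s_0=-\sum_{k\in\cN}s_k$; the voltage bounds \raf{p1:con5} hold automatically by {\sf A2}; and, using $\ell_{0,1}=|S_{0,1}|^2/v_0$ from \raf{p1:con1}, the capacity and thermal bounds \raf{p1:con6}--\raf{p1:con7} all reduce to the single magnitude constraint $\bigl|\sum_{k\in\cN}s_k\bigr|\le C$ with $C\triangleq\min\bigl\{\overline S_{0,1},\sqrt{v_0\,\overline\ell_{0,1}}\bigr\}$. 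Taking $\cN=\cI$ (only discrete users), the instance becomes: choose $x_k\in\{0,1\}$ to minimize $\sum_{k\in\cI}\overline f_k(1-x_k)$ subject to $\bigl|\sum_{k\in\cI}\overline s_k x_k\bigr|\le C$, i.e., to maximize the satisfied real demand within a disk of radius $C$.

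Next I would build the reduction from \textsc{Partition}. Given integers $a_1,\dots,a_n$ with $\sum_k a_k=2B$, I would construct discrete demands $\overline s_k$ whose real parts carry the objective and whose imaginary parts encode the partition target, so that the magnitude constraint $\bigl(\sum_{k\in X}\re(\overline s_k)\bigr)^2+\bigl(\sum_{k\in X}\im(\overline s_k)\bigr)^2\le C^2$ can be met with the maximal attainable real value \emph{only} when the selected set $X$ satisfies $\sum_{k\in X}a_k=B$; a mandatory ``anchor'' demand of large value and imaginary part $-B$ is a convenient device to make the imaginary coordinate of the sum equal $\sum_{k\in X}a_k-B$, which the disk then forces toward zero. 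Crucially, all penalties $\overline f_k$ can be taken as integers and all $\re(\overline s_k)$ kept polynomially bounded, so that every feasible objective value is an integer in $\{0,1,\dots,p(n)\}$ for some $p(n)=\poly(n)$, and $\OPT$ is a positive integer on the ``no equal-partition'' instances. This is precisely the construction underlying \cite{YC13CKP}.

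The final step is the integrality-gap argument that rules out an FPTAS. Suppose such a scheme existed; I would run it with $\epsilon=\tfrac{1}{p(n)+1}$, still polynomial time since $1/\epsilon=\poly(n)$. It returns a feasible $F$ with $f(F)\le(1+\epsilon)\OPT<\OPT+1$, because $\OPT\le p(n)$. As $f(F)\ge\OPT$ and both are integers, this forces $f(F)=\OPT$; that is, the FPTAS computes the exact optimum, hence decides whether the equal partition exists, in polynomial time. This would give \p${}={}$\np.

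The hard part will be the encoding in the middle step: the complex demands must be chosen so that the two-dimensional disk constraint simultaneously (i) admits every balanced partition, (ii) excludes every unbalanced subset from attaining the same real value, and (iii) keeps all magnitudes integral and polynomially bounded. The necessity of using both the real and the imaginary coordinate is what makes the argument go through --- a purely one-dimensional (real) capacity would reduce to an ordinary knapsack, for which the standard dynamic-programming rounding \emph{does} yield an FPTAS; it is exactly the quadratic nature of $|\cdot|\le C$ that survives down to a single link with zero impedance and defeats any fully polynomial scheme.
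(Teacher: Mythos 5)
Your overall strategy is the right one and coincides with the argument the paper actually relies on: the paper gives no in-text proof of Theorem~\ref{thm:h0}, but imports it from \cite{YC13CKP}, whose proof is precisely a reduction from a Partition-type problem exploiting the quadratic capacity constraint $\bigl|\sum_k \overline s_k x_k\bigr|\le C$, followed by the scaling argument you describe (run the putative FPTAS at $\epsilon=1/(p(n)+1)$ on instances whose objective values are integers bounded by $p(n)$, adapted from maximization to the paper's minimization form). Your first step --- the collapse of the single-link, $z_e=0$ instance of Cons.~\raf{p1:con1}--\raf{p1:con9} to the complex-demand knapsack problem with capacity $C=\min\{\overline S_{0,1},\sqrt{v_0\,\overline\ell_{0,1}}\}$ --- and your last step, the integrality-gap argument, are both correct and complete.

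The gap is in the middle step, which you yourself flag as ``the hard part,'' and one of the three requirements you impose there is self-defeating: in (iii) you ask that all \emph{magnitudes} be integral and polynomially bounded. The magnitudes are where the Partition numbers $a_k$ live; if they were polynomially bounded, the Partition instances produced by your reduction would be solvable in polynomial time by dynamic programming, so the reduction could not establish any hardness. Only the objective coefficients $\overline f_k$ (equivalently, the real parts that carry the cost) need to be polynomially bounded integers; the imaginary parts must be allowed to be exponentially large --- this is harmless, since they are part of the binary input and the FPTAS runs in time polynomial in the input length and $1/\epsilon$. Two further details need pinning down before the gadget works. First, the disk constraint forces $\sum_{k\in X}a_k=B$ only at the extremal real value (tangency of the circle with the line of achievable sums); subsets of strictly smaller total real part are essentially unconstrained once the anchor's real part $L$ is large, so the yes/no objective gap must come from that extremal level alone. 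With uniform costs this means you need the \emph{equal-cardinality} variant of Partition (still NP-complete, e.g.\ by padding with zeros), not plain Partition. Second, for the theorem to be non-vacuous alongside the paper's PTAS (and not subsumed by Theorem~\ref{thm:h3}), the hard instances should satisfy {\sf A4}; with your anchor sitting in the fourth quadrant this holds only if its real part is chosen large enough, roughly $L\ge B\cdot\max_k a_k$, so that every pairwise phase difference stays within $\tfrac{\pi}{2}$ --- a choice your sketch never makes explicit. All of these are fixable, and with them your outline does reconstruct the cited proof, but as written the construction would not go through.
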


By Theorem~\ref{thm:h0}, a PTAS is  among the best achievable efficient algorithm that can be attained for OPF with discrete demands, because FPTAS is not possible.

Next, we show that assumption {\sf A3} is necessary for PTAS. 

\begin{theorem}[see \cite{khonji2016optimal}] \label{thm:h1}
Unless {\sc P}={\sc NP}, there exists no $\alpha$-approximation for {\sc OPF} by a polynomial-time algorithm in $n$, for any $\alpha$ that has polynomial length in $n$, if $\re(z^\ast_e s_k)$ is allowed to be arbitrary for any $k\in\cI$,  even for a single-link distribution network $|\cE|=1$.      
\end{theorem}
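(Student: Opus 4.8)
The plan is to rule out any such approximation by a single \emph{gap-producing} reduction from the (weakly) \npcomplete{} \textsc{Subset-Sum} problem, in the spirit of \cite{khonji2016optimal}. Fix any candidate ratio $\alpha$ of polynomial bit-length. I would build, in time polynomial in the input and in the length of $\alpha$, a single-link \textsc{OPF} instance whose optimum satisfies $\OPT\le c$ on \textsc{Yes} inputs and $\OPT>\alpha c$ on \textsc{No} inputs. Any hypothetical polynomial-time $\alpha$-approximation would then output, on \textsc{Yes} inputs, a value $\le\alpha\,\OPT\le\alpha c$, whereas on \textsc{No} inputs every feasible solution (hence the algorithm's output) exceeds $\alpha c$; thresholding the returned value at $\alpha c$ decides \textsc{Subset-Sum} in polynomial time, forcing \p${}={}$\np. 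Since the construction is legitimate for every polynomial-length $\alpha$, this simultaneously excludes all of them.

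For the construction, take a \textsc{Subset-Sum} instance $a_1,\dots,a_r\in\ZZ_{>0}$ with target $B$, and build a network with $|\cE|=1$, the single edge $(0,1)$ of impedance $z_e\in\CC$, and all users at node $1$. Specialising \raf{p1:con2} and \raf{p1:con4} to one edge (equivalently, \raf{eq:rec-v} with $\cP_k\cap\cP_1=\{(0,1)\}$) gives $v_1=v_0-2\sum_{k}\re(z_e^\ast s_k)-|z_e|^2\ell_{0,1}$, so the voltage window $\underline v_1\le v_1\le\overline v_1$ is, up to the loss term, a two-sided bound on $\sum_{k:x_k=1}\re(z_e^\ast\overline s_k)$. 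I would introduce one discrete user per $a_i$ with $\re(z_e^\ast\overline s_i)$ proportional to $+a_i$ (a voltage drop), and one extra ``reward'' user whose $\re(z_e^\ast\overline s_{\mathrm{rew}})$ is a large \emph{negative} value (a voltage rise). Allowing both signs of $\re(z_e^\ast\overline s_k)$ is exactly the regime excluded by {\sf A3}, and it is indispensable here: the window is pinched (its width below the spacing between distinct partial sums) so that turning the reward user on keeps $v_1$ inside $[\underline v_1,\overline v_1]$ \emph{only} when a subset of the $a_i$-users compensates the rise by summing to exactly~$B$.

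Choosing $\underline v_1<v_0<\overline v_1$ keeps the all-off assignment feasible, so every instance is feasible and no spurious infeasibility arises. Setting $f_0\equiv0$ and giving each $a_i$-user a small drop-cost $\overline f_{a_i}$, I fix the reward user's drop-cost at $\overline f_{\mathrm{rew}}:=\alpha\sum_i\overline f_{a_i}+1$. If \textsc{Subset-Sum} is a \textsc{Yes} instance, activating the reward user together with a subset summing to $B$ is feasible and costs at most $c:=\sum_i\overline f_{a_i}$; if it is a \textsc{No} instance, the reward user must stay off in every feasible solution, so its drop-cost is always paid and $\OPT\ge\overline f_{\mathrm{rew}}>\alpha c$. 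This is the required gap, and the magnitude and thermal bounds $|S_{0,1}|\le\overline S_{0,1}$, $\ell_{0,1}\le\overline\ell_{0,1}$ are set generously (above $\sum_k|\overline s_k|$ and its square over $v_0$) so that they never bind.

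The main obstacle is that, unlike the loss-free model of \cite{khonji2016optimal}, the exact single-link model carries the nonlinear terms $|z_e|^2\ell_{0,1}$ in $v_1$ and the quadratic coupling $\ell_{0,1}=|S_{0,1}|^2/v_0$, which perturb the otherwise linear map from the active subset to $v_1$ and could blur the exact subset-sum correspondence. The key quantitative step is to scale all demands by a common small factor $t$: the linear ``signal'' in $v_1$ is $\Theta(t)$ while the loss perturbation is $\Theta(t^2)$, so for $t$ polynomially small the loss stays well below the $\Theta(t)$ spacing between distinct achievable partial sums. One then sets $\underline v_1,\overline v_1$ to a window whose width lies between these two scales, guaranteeing that feasibility of the reward-on configuration is equivalent to the existence of an exact-$B$ subset despite the losses. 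Verifying that $t$, the window endpoints, the demands, and $\overline f_{\mathrm{rew}}$ all have polynomial bit-length in $r$ and in the length of $\alpha$ completes the reduction.
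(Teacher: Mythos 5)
Your proposal is correct and follows essentially the same route as the paper's source for this theorem: the paper itself gives no proof, deferring to \cite{khonji2016optimal}, whose argument is exactly this kind of gap-producing \textsc{Subset-Sum} reduction in which a demand with negative $\re(z_e^\ast s_k)$ (a voltage rise) can be accommodated only when other demands sum exactly to the target inside a pinched voltage window. Your adaptation from the cited maximization model to this paper's minimization objective (drop-costs, with the reward user's drop-cost exceeding $\alpha$ times the total) is precisely the ``minor modification'' the paper alludes to, and your $\Theta(t)$-signal versus $\Theta(t^2)$-loss scaling correctly disposes of the line-loss terms that distinguish the exact single-link model from the loss-free one.
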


In  Theorem~\ref{thm:h1}, $\alpha$ can be as large as $2^{P(n)}$, where $P(n)$ is an arbitrary polynomial in $n$.
Thereom~\ref{thm:h1} shows that {\sf A3} is necessary for any approximation algorithm for {\sc OPF}  with discrete demands  to have a practical approximation ratio.

Finally, we show that assumption {\sf A4} is necessary for PTAS. Let $\theta$ be the maximum angle difference between any pair of demands, $\theta \triangleq \max_{k, k' \in \cN}\big|\angle s_k - \angle s_{k'}\big|$. Theorem~\ref{thm:h3} shows that the approximability of {\sc OPF}  with discrete demands depends on $\theta$. Hence, a PTAS requires {\sf A4}.

\begin{theorem}[see \cite{chau2016truthful,KCE2014Hard}] \label{thm:h3}
Unless {\sc P}={\sc NP}, for any $\delta>0$, there is no $\alpha$-approximation for \textsc{OPF} by a polynomial-time algorithm in $n$,  for $\theta \in  [\frac{\pi}{2}+\delta,\pi]$, where $\alpha,\delta$ have polynomial length in $n$ and $\delta$ is exponentially small in $n$, even for a single-link distribution network $|\cE|=1$ with $z_e = 0$.  
\end{theorem}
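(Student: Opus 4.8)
The plan is to prove inapproximability by a gap-producing reduction from \textsc{Partition}, exploiting the fact that once the pairwise demand angle exceeds $\tfrac{\pi}{2}$ (i.e. once {\sf A4} fails) the complex demands can cancel inside the magnitude capacity constraint. First I would specialize \textsc{OPF} to the single-link network $\cE=\{(0,1)\}$ with $z_{0,1}=0$. Then Cons.~\raf{p1:con4} forces $v_1=v_0$, Cons.~\raf{p1:con2}--\raf{p1:con3} give $S_{0,1}=\sum_{k\in\cN}s_k=-s_0$, and the impedance/thermal terms in \raf{p1:con1},\raf{p1:con6} collapse (taking $\overline\ell_{0,1}$ large and non-binding). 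Hence the problem reduces exactly to choosing $x_k\in\{0,1\}$ so as to minimize the dropped-demand cost $\sum_{k\in\cI}\overline f_k(1-x_k)$ subject to the single constraint $\big|\sum_{k\in\cI}\overline s_k x_k\big|\le\overline S_{0,1}$. Note that with $z_e=0$ the assumptions {\sf A1},{\sf A2},{\sf A3} hold trivially, so the construction isolates {\sf A4} as the only violated hypothesis.

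For the reduction, given $a_1,\dots,a_n\in\ZZ_{>0}$ with $\sum_i a_i=2B$, I would use rectangular coordinates to keep every quantity rational and of polynomial bit-length. Fix a small rational $\eta>0$, put $c=1$, $d=1+\eta$, and create $n$ \emph{item} demands $\overline s_i\triangleq a_i\,(c+{\bf i}\,d)$ together with one \emph{anchor} demand $\overline s_A\triangleq M\,(c-{\bf i}\,d)$, where $M\triangleq B\,\tfrac{c^2+d^2}{\,d^2-c^2\,}$. Every demand is a consumer ($\re(\cdot)>0$), and the only angle gap is between the anchor and any item, namely $\theta=2\arctan(d/c)=2\arctan(1+\eta)$, which lies strictly above $\tfrac{\pi}{2}$ and tends to $\tfrac{\pi}{2}$ as $\eta\downarrow 0$; choosing $\eta=2^{-\mathrm{poly}(n)}$ realizes any $\theta\in[\tfrac{\pi}{2}+\delta,\pi]$ with exponentially small $\delta$ while keeping $M$ (hence the whole instance) of polynomial size.

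The crux is the feasibility characterization. Writing $W\triangleq\sum_{i\in S}a_i$ for the total weight of the selected items \emph{when the anchor is also selected}, the net flow is $S_{0,1}=W(c+{\bf i}d)+M(c-{\bf i}d)$, so
\begin{equation*}
|S_{0,1}|^2=(W+M)^2c^2+(W-M)^2d^2,
\end{equation*}
an upward parabola in $W$ minimized exactly at $W=M\tfrac{d^2-c^2}{c^2+d^2}=B$, where the linear term vanishes and one gets $|S_{0,1}|^2=C^2+(W-B)^2(c^2+d^2)$ with $C^2\triangleq(B+M)^2c^2+(B-M)^2d^2$. Setting the capacity $\overline S_{0,1}\triangleq C$, any integer deviation $W\ne B$ inflates $|S_{0,1}|^2$ by at least $c^2+d^2>0$; thus, with the anchor selected, the instance is feasible \emph{iff} some subset achieves $W=B$, i.e. iff the \textsc{Partition} instance is a yes-instance.

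Finally I would amplify the gap. Set $\overline f_i\triangleq a_i$ for items, $\overline f_A\triangleq U_A$ for the anchor with $U_A\triangleq B\cdot 2^{P(n)}$ for an arbitrary polynomial $P$, and $f_0\equiv 0$. On a yes-instance one keeps the anchor and a subset with $W=B$, dropping only the complementary items of total cost $B$, giving $\OPT=B$. On a no-instance the anchor can never be kept (no $W=B$ exists and every other $W$ violates capacity), so it must be dropped and, since $C\gg 2B$ makes keeping all items feasible, $\OPT=U_A$. Hence any polynomial-time $\alpha$-approximation with $\alpha<U_A/B=2^{P(n)}$ outputs a value below $U_A$ on yes-instances and at least $U_A$ on no-instances, deciding \textsc{Partition} in polynomial time and forcing \textsc{P}=\textsc{NP}; as $P$ is arbitrary, no $\alpha$-approximation of polynomial length exists. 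The main obstacle I anticipate is bookkeeping rather than conceptual: verifying that taking $\delta$ (equivalently $\eta$) exponentially small still keeps $M$, $C$, and $U_A$ of polynomial bit-length so the reduction is genuinely polynomial, and confirming that the non-binding variable $\ell_{0,1}$ introduces no feasibility slack that could blunt the exact $W=B$ characterization.
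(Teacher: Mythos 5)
Your proposal is correct (modulo the rationality/bit-length bookkeeping you already flag, which is routine) and takes essentially the same route as the source of this theorem: the paper itself gives no proof, deferring to \cite{chau2016truthful,KCE2014Hard}, whose argument is exactly this kind of \textsc{Partition}/Subset-Sum gap reduction on a single link with $z_e=0$, using an anchor demand in the fourth quadrant whose reactive power can only be cancelled by a subset of item demands summing exactly to the target. Your recasting of it for the minimization objective (charging $\overline f_k$ for dropped demands, with a huge cost on the anchor) is precisely the ``minor modification'' of the cited maximization-version proofs that the paper alludes to in Section~\ref{sec:hard}.
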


\vspace*{-10pt}
\section{Evaluation Studies} \label{sec:sim}

In this section, the performance of {\sc PTAS} is evaluated by simulations in terms of optimality and running time. 

\vspace{-10pt}
\subsection{Simulation Settings}

The evaluation was performed on the Bus 4 distribution system of the Roy Billinton Test System (RBTS)\cite{allan1991reliability, huang2017sufficient}, which comprises of 13 nodes, in which the generation source is attached to the sub-station node 0, the base power capacity of this network is  $8$MVA, and the base voltage is $11$KV.

The evaluation was also performed on the IEEE 123-node network, in which the generation source is attached to node 150, and the base capacity and voltage are $5$MVA and $4.16$KV, respectively.

The IEEE 123-node network is an unbalanced three-phase network with several devices that are not modeled in {\sc OPF}. As in \cite{gan2015exact}, one can modify the IEEE network by the following:
\begin{itemize}
	\item The three phases are assumed to be decoupled into three identical single phase networks.
	\item Closed circuit switches are modeled as shorted lines and ignore open circuit switches.
	\item Transformers are modeled as lines with appropriate impedances.
\end{itemize}

\begin{figure*}[!htb]
	\begin{subfigure}[b]{0.5\textwidth}
		\begin{center}
			\includegraphics[scale=0.5]{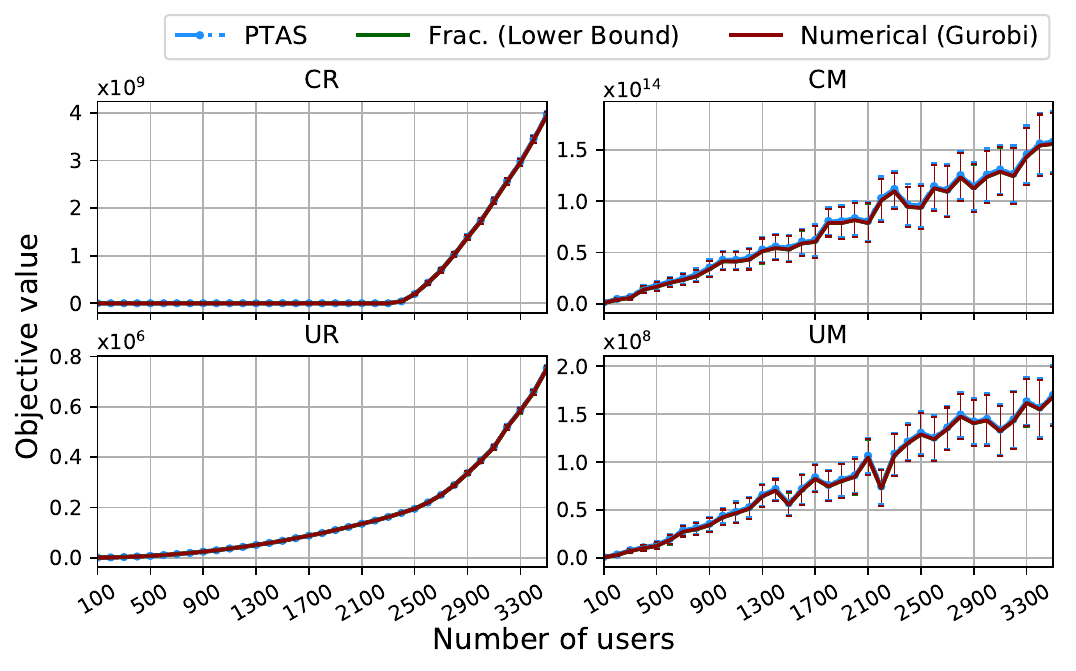}
		\end{center}\vspace*{-10pt}
		\caption{RBTS 13-node network}
		\label{fig:obj13}
	\end{subfigure}
	\begin{subfigure}[b]{0.5\textwidth}
		\begin{center}
			\includegraphics[scale=0.5]{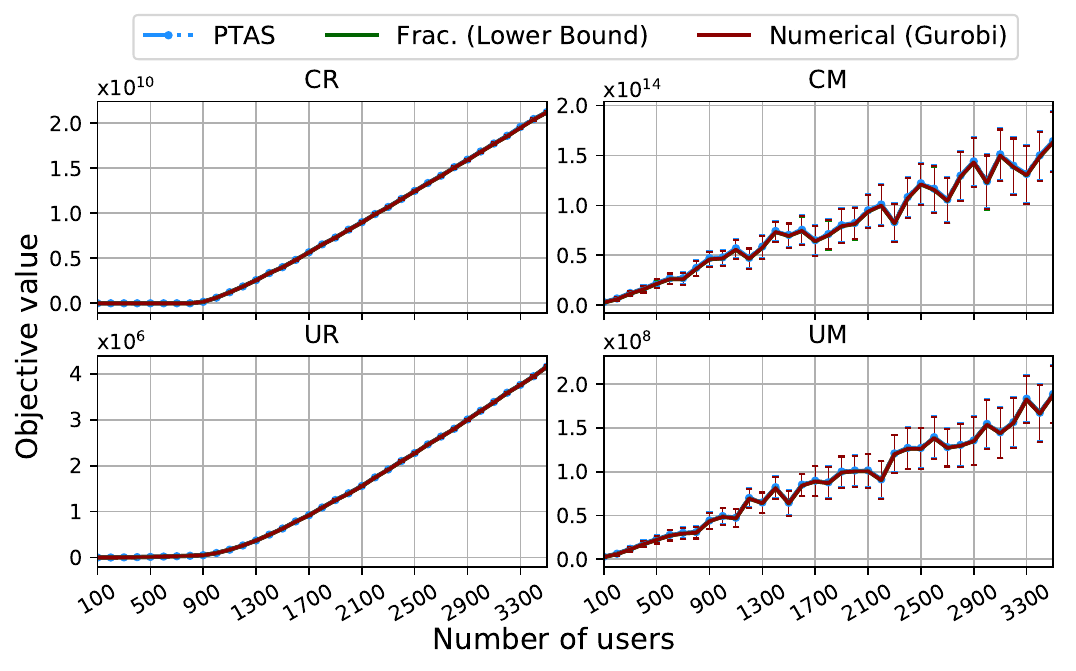}
		\end{center} \vspace*{-10pt}
		\caption{IEEE 123-node network}
		\label{fig:obj123}
	\end{subfigure}  \vspace*{-15pt}
	\caption{The average objective values of {\sc PTAS} (without partial guessing), Gurobi numerical solver, and fractional solutions with relaxed discrete demands (as the lower bounds to the true optimal values), against the number of users with 95\% confidence intervals. }\vspace*{-15pt}
\end{figure*}
\begin{figure}[!h]

	\begin{center}
			\hspace*{-15pt}\includegraphics[scale=0.5]{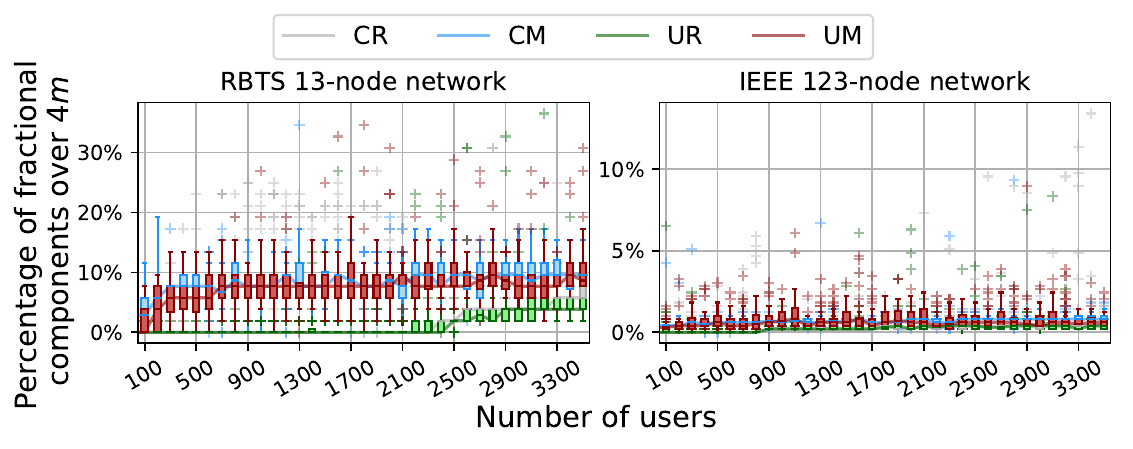}
	\end{center}	\vspace*{-10pt}
	\caption{The ratio of fractional components after solving {\sc P1}.  The `+' points represent the outliers.}
	\label{fig:frac} \vspace*{-15pt}
\end{figure} 
\begin{figure}[!h]
	\begin{center}

		\includegraphics[scale=0.5]{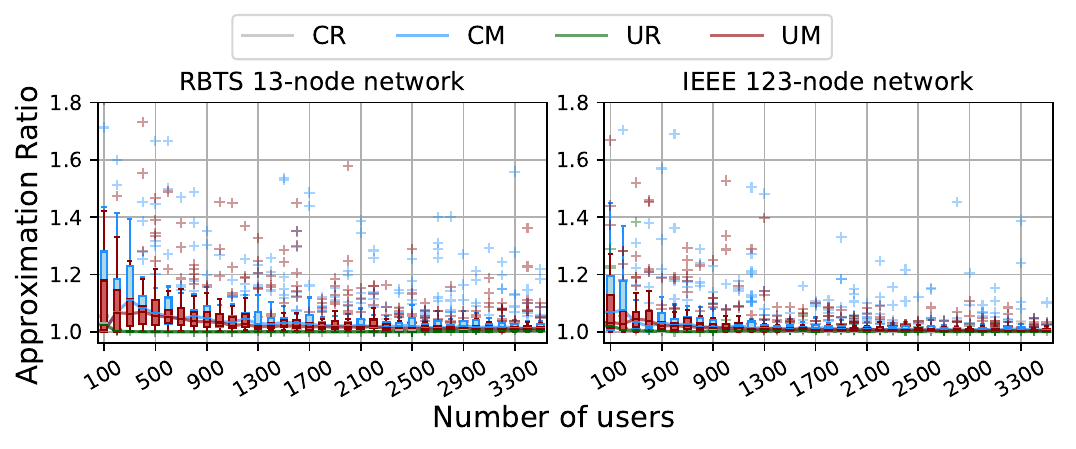}
	\end{center}\vspace*{-10pt}
	\caption{The empirical approximation ratios of PTAS   (without partial guessing) for different case studies, against the number of users.}
	\label{fig:ar}\vspace*{-15pt}
\end{figure}

\begin{figure}[!h]\vspace*{-10pt}
	\begin{center}
		\includegraphics[scale=0.5]{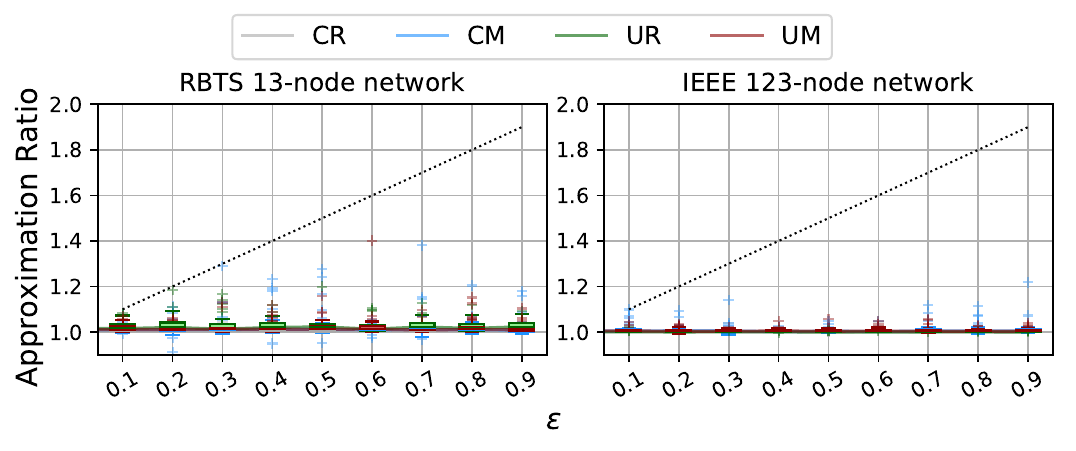}
	\end{center}\vspace*{-10pt}
	\caption{The empirical approximation ratios of PTAS  for different case studies, against different values of $\epsilon$. The diagonal line represents the theoretical bound on the approximation ratio for PTAS.}
	\label{fig:ar_e}\vspace*{-15pt}
\end{figure}

All power demands are discrete and are randomly positioned at the nodes in $\cV^+$ uniformly. Several case studies are considered by different user types and the correlations between user demands and  $f_k$:

\begin{itemize}
\item {\em User Types}:
\begin{enumerate}
	
\item {\em Residential} (R): The users have small power demands ranging from $500$VA to $5$KVA.

\item {\em Industrial} (I): The users have big demands ranging from $300$KVA to $1$MVA with non-negative reactive power.

\item {\em Mixed} (M): The users consist of a mix of industrial and residential users, with less than $20$\% industrial users.

\end{enumerate}
				
\item {\em Cost-Demand Correlation}:
\begin{enumerate}
	
\item {\em Correlated Setting} (C): The cost objective of each user  is a function of his power demand as follows:
$f_k({\re(s_k)}) = {\Big(|\overline s_k| - \tfrac{\re(s_k)}{\re(\overline s_k)} |\overline s_k|\Big)  }^2.$ 

\item {\em Uncorrelated Setting} (U): 
The cost objective of each user is independent of his/her power demand and is generated randomly from $[0, |\overline s_{\max}(k)|]$. Here $\overline s_{\max}(k)$ depends on the user type. If user $k$ is an industrial user then $|\overline s_{\max}(k)| = 1$MVA, otherwise $|\overline s_{\max}(k)| = 5$KVA. More precisely, given a random $r\in |\overline s_{\max}(k)|$, and $f_k({\re(s_k)}) = { r - \tfrac{\re(s_k)}{\re(\overline s_k)} r.   }$

\end{enumerate}	
\end{itemize}
The case studies will be represented by the acronyms. For example, the case study named ``CM'' stands for the one with mixed users and correlated cost-demand setting. 

In order to evaluate the performance of our algorithm PTAS, Gurobi  numerical solver is used as a benchmark to obtain numerical solutions for {\sc OPF}. Note that there is no guarantee that Gurobi will return an optimal solution, nor it will terminate in a reasonable time. Hence, we need to set a timeout (as 200 sec in the evaluation). Whenever Gurobi exceeds the timeout, the current best solution will be considered.
\ifsupplementary
More information of the simulation settings, including parameters of RBTS and IEEE networks, {\color{blue} can be found in the appendix.}
\else
\fi


\begin{figure}[!h]
	\begin{center} 
		\includegraphics[scale=0.5]{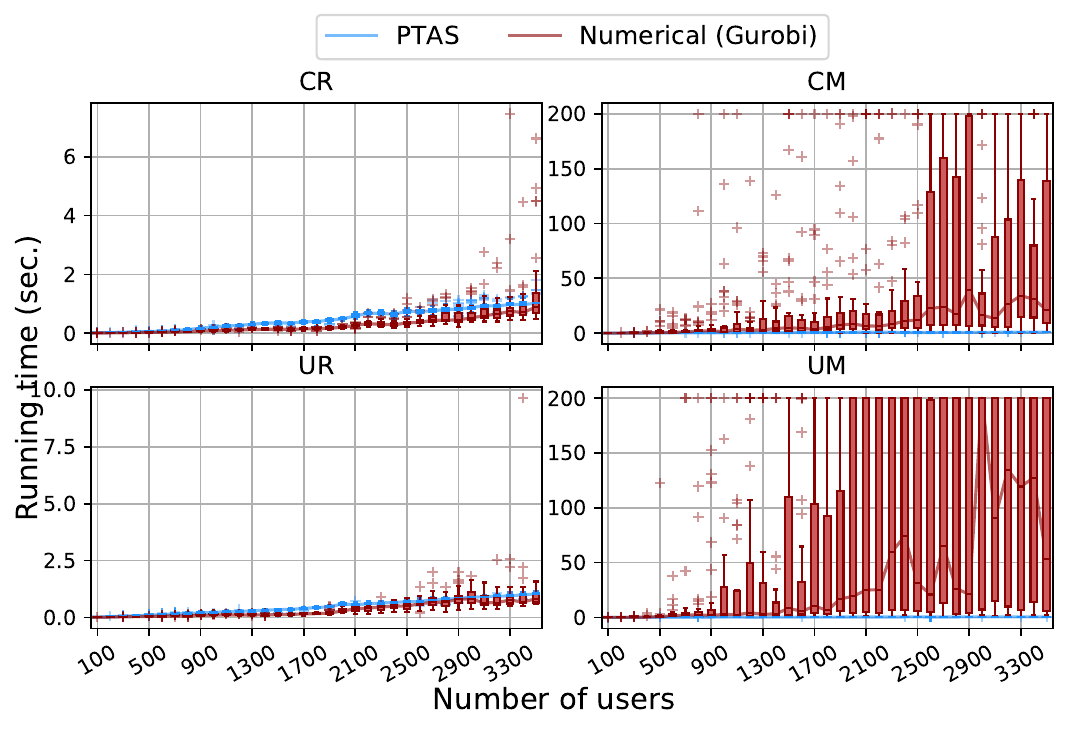}
	\end{center} \vspace*{-15pt}
	\caption{The median of running times of PTAS  (without partial guessing) and Gurobi numerical solver for different case studies in IEEE 123-node network.}\vspace*{-10pt}
	\label{fig:time}
\end{figure}

\begin{figure}[!h]
	\begin{center} 
		\includegraphics[scale=0.48]{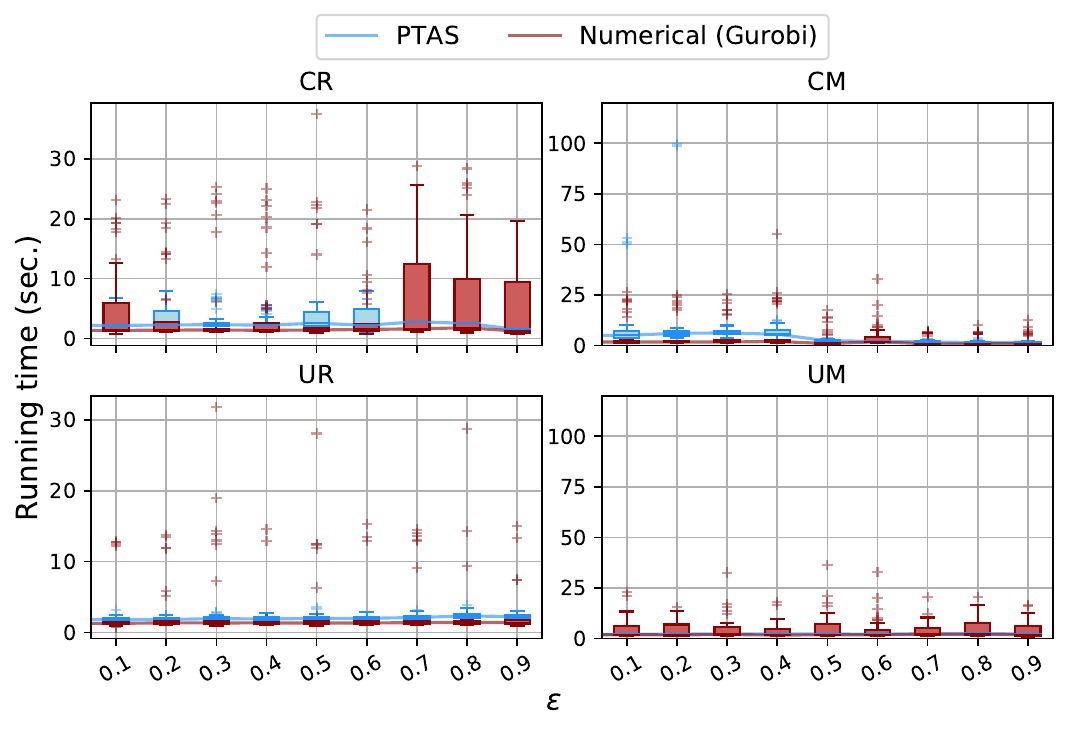}
	\end{center} \vspace*{-15pt}
	\caption{ The median of running times of PTAS and Gurobi numerical solver for different values of $\epsilon$ and different case studies in IEEE 123-node network.}\vspace*{-10pt}
	\label{fig:time_e}
\end{figure}

\vspace*{-10pt}
\subsection{Evaluation Results}
\vspace*{-5pt}
\subsubsection{Optimality}
Fig~\ref{fig:obj13} (resp., \ref{fig:obj123}) presents the objective values obtained by {\sc PTAS}, Gurobi numerical solver, and the lower bounds to the true optimal values by fractional solutions with relaxed discrete demands (i.e., setting all $x_k \in [0,1]$)  respectively using the RBTS 13-node network (resp., IEEE 123-node network) for up to $3500$ users. Each run was evaluated with over $40$ random instances. PTAS will terminate, when its objective value is close to the lower bound.
The objective values of PTAS  are often close to the true optimal values. This is because the number of fractional components in the relaxed problem {\sc P1} is often small. Fig.~\ref{fig:frac} shows the ratio of fractional components over $4m$ is close to 10\%, which stays small when the number of users increases. 

The empirical approximation ratios of PTAS for the two networks are plotted in Fig.~\ref{fig:ar} against the number of users. We observe that the empirical approximation ratio is close to 1.2 in most cases. 
This occurs when the optimal solution satisfies all demands, whereas PTAS (without partial guessing) rounds some fractional demands to zero, which incurs a high cost.
 There are few instances with a larger empirical approximation ratio, but increasing partial guessing is able to resolve this issue,  which is still within polynomial running time. 
 We observe from Fig.~\ref{fig:ar} that the approximation ratio is always below $1.8$,  even without the additional guessing step. Therefore for $\epsilon \ge 0.8$ we may not need guessing step in practice.  
 {Fig.~\ref{fig:ar_e} shows the empirical approximation ratio of {\sc PTAS} {\em with} patial guessing against differnt values for $\epsilon$. In our implementation of {\sc PTAS}, partial guessing is acheived by random sampling, and the algorithm terminates if it either obtains a solution of at most $1+\epsilon$ of the fractional solution, or reaches a given timeout value of $100$. We observe from the figure that the empirical approximation ratio falls below the theoretical (depicted by the dashed line). }

\subsubsection{Running Time}

The computation time of {\sc PTAS} is compared against that of Gurobi numerical solver. Computation time is significantly important when implemented in a controller in practice, and this will have implications to the overall resilience of power grid.
The running time of PTAS (without partial guessing) is plotted in Fig.~\ref{fig:time} under different case studies for IEEE 123-node network.
Although the current implementation of PTAS is not fully optimized, its running time is still substantially better than that of Gurobi, and is observed to scale linearly as the number of users. On the other hand,  the running time of Gurobi is much higher in many cases, which does not provide any guarantee on the termination of execution, if timeout is not set. Many instances experienced timeouts, especially for the case study UR. The actual running time of Gurobi may substantially increase if the timeout value is further increased. 
The running time of {\sc PTAS} is plotted against different values of $\epsilon$ in Fig.~\ref{fig:time_e}. Even though our implementation of {\sc PTAS} is not optimized (implemented in scripting language Python), it performs substantially better than Gurobi (implemented in native language) in most cases. The running time of {\sc PTAS} slightly improves as we increase $\epsilon$. We note that PTAS can be easily parallelized into independent subprograms. Subsequently, the running time could be improved linearly.

\vspace{-10pt}
\section{Conclusion}
This paper presents a polynomial-time approximation algorithm to solve the convex relaxation of OPF with combinatoric constraints, which has a provably small parameterized approximation ratio (also known as PTAS algorithm) that can combine with the sufficient conditions on the exactness of convex relaxation to solve OPF in general. This paper also discusses fundamental hardness results of OPF to show that our PTAS is among the best achievable in theory. Further simulations show that our algorithm can produce close-to-optimal solutions in practice. The topic of combinatorial optimization of AC power networks is burgeoning. Readers can refer to the monograph \cite{chau2018combinatorial} for a collection of recent results. 

\bibliographystyle{ieeetr}
\bibliography{reference}

\end{document}